\documentclass[journal]{IEEEtran}
%

\usepackage{color}
\usepackage{amsmath}
\usepackage{amssymb}
\usepackage{dsfont}
\usepackage{upgreek}
\usepackage{booktabs}
\usepackage{multirow}
\usepackage[table,xcdraw]{xcolor}
\usepackage{subfigure}
\usepackage{bm}
\usepackage{xspace}
\usepackage{hyperref}



%
\usepackage{ifpdf}
 \ifpdf
 \else
 \fi

%
\usepackage{cite}

%
\ifCLASSINFOpdf
  \usepackage[pdftex]{graphicx}
  \graphicspath{{../pdf/}{../jpeg/}}
  \DeclareGraphicsExtensions{.pdf,.jpeg,.png}
\else
  \usepackage[dvips]{graphicx}
  \graphicspath{{../eps/}}
  \DeclareGraphicsExtensions{.eps}
\fi
%
%

\usepackage{caption2}  

%
\usepackage{amsmath}
%

\usepackage{ntheorem}
\theoremstyle{plain}
\newtheorem{theorem}{Theorem}

\newtheorem*{proof}{Proof}
\theoremstyle{definition}
\newtheorem{definition}{Definition}
\newtheorem{example}{Example}
\usepackage[numbers,sort&compress]{natbib}
\usepackage{enumerate}
\usepackage{algorithm}
%
\usepackage{algorithmic}


%
\usepackage{array}

%
\usepackage{url}

\usepackage{multirow}
\usepackage{bm}




\begin{document}
%
\title{Maximal co-occurrence nonoverlapping sequential rule mining}
%
%
%

\author{Yan Li, Chang Zhang, Jie Li, Wei Song, Zhenlian Qi, Youxi Wu,~\IEEEmembership{Member,~IEEE,} and Xindong Wu,~\IEEEmembership{Fellow,~IEEE}
\thanks{Manuscript received August 22, 2022.  (Corresponding author: Y. Wu). }
\thanks{Yan Li, Chang Zhang, Jie Li are with the School of Economics and Management, Hebei University of Technology, Tianjin, 300400, China (e-mail: lywuc@163.com.)}
\thanks {Wei Song is with the North China University of Technology, Beijing, China, (e-mail: songwei@ncut.edu.cn)}
\thanks {Zhenlian Qi is with Guangdong Eco-Engineering Polytechnic, Guangzhou 510520, China, (e-mail: qzlhit@foxmail.com)}
\thanks{Youxi Wu is with the School of Artificial Intelligence, Hebei University of Technology, Tianjin, 300400, China (e-mail: wuc567@163.com.)}
\thanks{Xindong Wu is with Key Laboratory of Knowledge Engineering with Big Data (the Ministry of Education of China), Hefei University of Technology, Hefei, 230009, China (e-mail: xwu@hfut.edu.cn)}
}

%
%

\markboth{IEEE Transactions on Knowledge and Data Engineering }
{Shell \MakeLowercase{\textit{Ma et al.}}: Bare Demo of IEEEtran.cls for IEEE Journals}
%



\maketitle

\begin{abstract}
The aim of sequential pattern mining (SPM) is to discover potentially useful information from a given sequence. Although various SPM methods have been investigated, most of these focus on mining all of the patterns. However, users sometimes want to mine patterns with the same specific prefix pattern, called co-occurrence pattern. Since sequential rule mining can make better use of the results of SPM, and obtain better recommendation performance, this paper addresses the issue of maximal co-occurrence nonoverlapping sequential rule (MCoR) mining and proposes the MCoR-Miner algorithm. To improve the efficiency of support calculation, MCoR-Miner employs depth-first search and backtracking strategies equipped with an indexing mechanism to avoid the use of sequential searching. To obviate useless support calculations for some sequences, MCoR-Miner adopts a filtering strategy to prune the sequences without the prefix pattern. To reduce the number of candidate patterns, MCoR-Miner applies the frequent item and binomial enumeration tree strategies. To avoid searching for the maximal rules through brute force, MCoR-Miner uses a screening strategy. To validate the performance of MCoR-Miner, eleven competitive algorithms were conducted on eight sequences. Our experimental results showed that MCoR-Miner outperformed other competitive algorithms, and yielded better recommendation performance than frequent co-occurrence pattern mining. All algorithms and datasets can be downloaded from https://github.com/wuc567/Pattern-Mining/tree/master/MCoR-Miner. 

\end{abstract}

\begin{IEEEkeywords}
Sequential pattern mining, sequential rule mining, rule-antecedent, co-occurrence pattern, maximal rule mining
\end{IEEEkeywords}

%
\IEEEpeerreviewmaketitle

\section{Introduction}
%

\IEEEPARstart{A}{s} an important method of knowledge discovery \cite {wuxindong2022tmis}, sequential pattern mining (SPM) \cite {spmacm} aims to mine sub-sequences (patterns) that meet certain conditions from sequence datasets \cite {gan2019tkdd}. A variety of SPM methods have been derived for different mining requirements, such as order-preserving SPM for time series \cite {wu2022orde}, SPM for large-scale databases \cite {linlarge2021, itjlarge},  episode pattern  mining \cite {episodepattern}, spatial co-location pattern mining \cite {wanglizhen2018, wanglizhen2022}, contrast SPM \cite {wu2021constrast, jessica2020}, negative SPM \cite { Dong2019tnnls, wu2022negative}, high utility SPM \cite {gan2021tkde}, high average-utility SPM \cite {Truongtkde,weisong2021, jinkais2020}, outlying SPM \cite {duanlei2020tkdd}, three-way SPM \cite {Min2020ins, wu2021ntpm}, co-occurrence SPM \cite {cooccpattern, generator2020kais}, and SPM with gap constraints \cite {wang2022apin}. One of the disadvantages of traditional SPM is that it only considers whether a given pattern occurs within a sequence and ignores the repetition of the pattern in the sequence \cite {wu2018tcyb}. For example, the support (number of occurrences) of a pattern $ad$ in sequence $adbdad$ is one according to traditional SPM, despite the pattern $ad$ occurring more than once in the sequence. From this example, we see that repetition is ignored in traditional SPM, meaning that some interesting patterns will be lost \cite {gapspm}.

To solve this issue, gap constraint SPM was proposed  \cite {zhang2007tkdd}, which can be expressed as \textbf{p} = $p_1p_2$ $\cdots$ $p_{m-1}p_m$ with $gap$ = $[a,b]$ (or \textbf{p} = $p_1[a,b]p_2$…$[a,b]p_m$), where $a$ and $b$ $(0 \leq a \leq b)$ are integers indicating the minimum and maximum wildcards between $p_j$ and $p_{j+1}$, respectively \cite {shi2020apin, li2012tkdd}. For example,  pattern $a[0,3]d$ means that there are zero to three wildcards between $a$ and $d$. Repetitions of patterns can be useful in terms of capturing valuable information from sequences, and gap constraint SPM therefore has many applications, such as septic shock prediction for ICU patients \cite {lijinyan2017}, keyphrase extraction \cite{keywu2017}, missing behaviors analysis \cite{Cao2021}, and pyramid scheme pattern mining \cite{pyramidpattern}. However, gap constraint SPM is not only difficult to solve, but also has many forms, such as periodic gaps SPM \cite {wu2014apin}, disjoint SPM \cite {disjoint}, one-off SPM \cite {wu2021tmis}, and nonoverlapping SPM \cite {wang2022apin}. Previous research work has shown that the nonoverlapping SPM avoids the production of many redundant patterns \cite {wu2018tcyb}; for example, under the nonoverlapping condition, the pattern $ad$ occurs twice in sequence $adbdad$. 

Unfortunately, current schemes based on gap constraint SPM can only mine all of the frequent patterns. Compared with rule mining, frequent pattern mining is not suitable for making predictions or recommendations. Rule mining \cite {fournierviger2015tkde} focuses on mining rules such as \textbf{p} $\to$ \textbf{q}, which means that if pattern \textbf{p} occurs in the sequence, pattern \textbf{q} is likely to appear afterward with a probability higher than or equal to a given confidence, i.e., $conf($\textbf{p} $\to$ \textbf{q}$)$ $\geq$ $mincf$. However, in cases where users only want to find the rules with the same antecedent \textbf{p}, then mining all the strong rules is not only time-consuming and laborious, but also meaningless. For example, in a recommendation problem, researchers hope to discover potential rules from historical data and use the recent events to predict future events. Obviously, it is valuable to discover rules with the same recent events instead of all rules. This kind of rule is called co-occurrence rule and is more meaningful. For clarification, an illustrative example is as follows.

\begin{example}\label{example1}

Suppose we have a sequence \textbf{s} = $adbdadcdccabadcd$, a gap constraint $gap$ = [0,3], and a predefined support threshold \textit{minsup} = 3. According to nonoverlapping SPM, there are 12 frequent patterns: \{$a$, $c$, $d$, $ad$, $cd$, $dc$, $dd$, $adc$, $add$, $dcd$, $ddc$, $adcd$\}. Moreover, if we continue to set the minimum confidence threshold \textit{mincf} = 0.7, there are 11 rules: \{$a$ $\to$ $d$, $a$ $\to$ $dc$, $a$ $\to$ $dd$, $a$ $\to$ $dcd$, $d$ $\to$ $d$, $d$ $\to$ $c$, $ad$ $\to$ $d$, $ad$ $\to$ $c$, $ad$ $\to$ $cd$, $dd$ $\to$ $c$, $adc$ $\to$ $d$\}. Obviously, it is difficult for users to apply these excessive numbers of frequent patterns and rules.

However, if we have a prefix-pattern (or antecedent) \textbf{p} = $ad$, the number of frequent super-patterns and rules will be reduced. The details are shown as follows. We know that \textbf{p} = $ad$ occurs four times in sequence \textbf{s}, since the subsequence $s_1s_2$ is $ad$ and the subsequences $s_5s_6$, $s_{11}s_{14}$, and $s_{13}s_{16}$ are also $ad$. Thus, the support of \textbf{p} in \textbf{s} is four. Similarly, we know that the subsequences $s_1s_4s_7$, $s_5s_6s_9$, and $s_{11}s_{14}s_{15}$ are all $adc$. Thus, the support of pattern $adc$ in \textbf{s} is three, which is not less than $minsup$ = 3. Hence, the pattern $adc$ is a frequent co-occurrence pattern of \textbf{p} = $ad$. Furthermore, $ad$ $\to$ $c$ is a co-occurrence rule whose confidence is 3/4 = 0.75 $\geq$ $mincf$ = 0.7. Similarly, we know that patterns $add$ and $adcd$ are frequent, and that $ad$ $\to$ $d$ and $ad$ $\to$ $cd$ are co-occurrence rules. Hence, there are only three co-occurrence patterns of pattern \textbf{p}, $adc$, $add$, and $adcd$, and three co-occurrence rules $ad$ $\to$ $c$, $ad$ $\to$ $d$, and $ad$ $\to$ $cd$, which means that the number of frequent patterns and rules is greatly reduced.
\end{example}

Inspired by the maximal SPM, the concept of maximal co-occurrence rules (MCoRs) is developed to further decrease the number of co-occurrence rules. For instance, if a rule \textbf{p} $\to$ \textbf{r} is an MCoR, then we know that \textbf{p} $\to$ \textbf{q} is also a co-occurrence rule, where pattern \textbf{q} is the prefix pattern of \textbf{r}. Moreover, if the pattern \textbf{w} is a superpattern of \textbf{r}, then \textbf{p} $\to$ \textbf{w} is not a co-occurrence rule. From the above examples, we see that it is meaningful to investigate MCoR mining. The main contributions of this paper are as follows. 

\begin{enumerate}[1.]
\item  To avoid mining irrelevant patterns and obtain better recommendation performance, we develop MCoR mining, which can mine all MCoRs with the same rule-antecedent, and we propose the MCoR-Miner algorithm. 

\item In MCoR mining, the user does not need to set the support threshold, since it can be automatically calculated based on the support of the rule-antecedent and the minimum confidence threshold.

\item To improve support calculation, MCoR-Miner consists of three parts: a preparation stage, candidate pattern generation, and a screening strategy to improve efficiency.


\item To validate the performance of MCoR-Miner, eleven competitive algorithms and eight datasets are selected. Our experimental results verify that MCoR-Miner outperforms the other competitive algorithms and yields better recommendation performance than frequent SPM.
\end{enumerate}

The structure of this paper is as follows. Section \ref{section:Related work} gives an overview of related work. Section \ref{section:Problem definitions} defines the problem. Section \ref{section:Proposed algorithm} presents the MCoR-Miner algorithm. Section \ref{section:Experimental results and analysis} reports the performance of MCoR-Miner. Section \ref{section:CONCLUSION} concludes this paper.

\section{Related work}
\label{section:Related work}

The aim of SPM is to find subsequences (patterns) in a sequence database that meet the given requirements \cite {Okolica2020tkde}. SPM methods are commonly used for data mining, since their results are intuitive and interpretable \cite {smedt2020tkde,forest2022tkdd}. Traditional SPM mainly focuses on frequent pattern mining, which means the mined patterns have high frequency. To reduce the number of patterns, top-$k$ SPM \cite {Dong2019tnnls}, closed SPM \cite {closed2020kbs}, and maximal SPM \cite {li2022apinm} were developed, all of which require that the data are static rather than dynamic. To overcome this drawback, incremental SPM \cite {incremental} and window SPM \cite {window} methods were explored. However, the research community noticed that rare patterns were of great significance in the field, and rare pattern mining was also proposed \cite {rare2016tkdd}. In addition, to discover missing events, negative SPM methods such as e-NSP \cite {Cao2016ensp} and e-RNSP \cite {Dong2020ernsp} were designed. An approach called high utility SPM was also designed \cite {jlintkdd2022, gantcyb}, which can discover patterns with low frequency but high utility \cite {wulei2021eswa, Nawaz2022tmis}.

Most SPM methods can be seen as classical SPM \cite {gantkdd}, since these methods only consider whether a pattern occurs within a sequence, while repetitive SPM deeply considers the number of occurrences of a pattern in the sequence \cite {wu2014apin}. For example, suppose we have two sequences: \textbf{s}${_1}$=aabbaaba and \textbf{s}${_2}$=abcbc. According to classical SPM, the support of pattern aba in \textbf{s}${_1}$ and \textbf{s}${_2}$ is one, since pattern aba occurs in sequence \textbf{s}${_1}$=aabbaaba, but does not occur in sequence \textbf{s}${_2}$. Thus, classical SPM neglects the fact that pattern aba occurs in sequence \textbf{s}${_1}$=aabbaaba more than once, while repetitive SPM considers the number of occurrences. Another significant difference is that classical SPM focuses on mining  patterns in sequences with itemsets, while repetitive SPM mainly aims to mine  patterns in sequences with items \cite{pmdb2022six}. For example, (ab)(bc)(a)(bd)(ad) is a sequence with itemsets, and each itemset has many ordered items. If all itemsets in a sequence have only one item, then the sequence is a sequence with items. Thus, a sequence with items can be seen as a special case of a sequence with itemsets. The sequences with items are used in many fields, such as DNA sequence, protein sequence, clickstreams, and commercial data. 

Note that repetitive SPM is similar to episode mining \cite {episodemaximal}. The differences are three-fold. First, episode mining deals with one sequence, while repetitive SPM processes one or more sequences. Second, episode mining aims to mine patterns in an event sequence, where an event sequence can be represented by $<$($e_1$,$t_1$), ($e_2$,$t_2$), $\cdot$ ($e_n$,$t_n$)$>$, where $e_i$ is an event set, and $t_i$ is the occurrence time of $e_i$. In contrast, in repetitive SPM, the sequences do not have the occurrence time. Third, in episode mining, $e_i$ can be a set, while in repetitive SPM, the sequence consists of items, rather than sets.

Compared with classical SPM, repetitive SPM not only is more challenging, but  also has many forms: general form (no condition) \cite {wu2014apin}, disjoint  \cite{pmdb2022six}, one-off \cite {li2022apind}, and nonoverlapping forms \cite {wu2021insweak}. Note that the disjoint form was called the nonoverlapping form in some studies \cite {Dong2020ernsp, disjoint}, which is far different from the nonoverlapping form in this study. To clarify the difference between disjoint and nonoverlapping, an illustrative example is shown as follows.

\begin{example}\label{exampledisjoint}
Suppose we have a sequence \textbf{s}=aabbaaba and a pattern \textbf{p}=a[0,1]b[0,1]a.

In the disjoint form \cite {Dong2020ernsp, disjoint}, the first position of an occurrence is greater than the last position of its previous occurrence. Thus, there are two occurrences: $<$1,3,5$>$ and $<$6,7,8$>$. Note that $<$1,3,5$>$ and $<$2,4,6$>$ do not satisfy the disjoint form, since the first position of occurrence $<$2,4,6$>$ 2 is less than 5, which is the last position of $<$1,3,5$>$.

In the nonoverlapping form \cite {wu2018tcyb}, each item cannot be reused by the same $p_j$, but can be reused by different $p_j$. Thus, there are three occurrences: $<$1,3,5$>$,$<$2,4,6$>$, and $<$6,7,8$>$. Note that $<$2,4,6$>$, and $<$6,7,8$>$ satisfy the nonoverlapping form, since  $p_3$ matches $s_6$ in $<$2,4,6$>$, and $p_1$ matches $s_6$ in $<$6,7,8$>$.
\end{example}

Recently, various applications for SPM with gap constraints were investigated. For example, top-$k$ contrast nonoverlapping SPM was proposed, in which the mined patterns can be used as features for a sequence classification task \cite {wu2021constrast}. To discover  low frequency but high average utility patterns, high average utility one-off SPM \cite {wulei2021eswa} and nonoverlapping SPM \cite {dfom2021kbs} were developed. To discover missing events, a one-off negative SPM was proposed, which could be used to predict the future trend in traffic flow \cite {wu2022negative}. Inspired by three-way decisions \cite {Min2020ins, zhanthreeway}, nonoverlapping three-way SPM \cite {wu2021ntpm} was designed to mine the patterns to which users pay the most attention, and can effectively mine patterns composed of strong-interest and medium-interest items.

Most of these schemes aim to discover all of the patterns that satisfy the predefined constraints. However, in general, these methods will discover numerous patterns. Although top-$k$ SPM \cite {wu2021constrast}, nonoverlapping closed SPM  \cite {closed2020kbs}, and nonoverlapping maximal SPM \cite {li2022apinm} can reduce the number of patterns, users may not be interested in many of the patterns that are discovered. However, these mining methods fail to reveal the relationship between patterns. Rule mining is an effective method in discovering the relationship between patterns \cite{fournierviger2015tkde}. There are many rule mining methods such as association rule mining \cite {assrule}, sequential rule mining \cite {fournierviger2015tkde}, episode rule mining \cite {episoderule}, and order-persevering rule mining \cite{wu2022TKDE}. Similar to pattern mining, there are many methods to reduce the redundant rules, such as maximum consequent and minimum antecedent \cite {rulemining}, top-k rule mining \cite {assrule}, closed rule mining, and maximal rule mining \cite{wu2022TKDE}. Nevertheless, in some cases, users know a prefix pattern advance, and they want to discover its super-patterns. This is called co-occurrence pattern mining \cite {co2019apin}.  Based on the co-occurrence patterns, co-occurrence rules can be further explored to reveal the relationships between the prefix patterns and their super-patterns. Although we can mine all patterns at first, and then filter out useless patterns with different prefixes, this approach will increase the running time \cite {targetpattern}.

In summary, although the use of SPM with gap constraints can make the mining results more meaningful, the mining results of this scheme are not targeted. In order to make the mining results more suitable for recommendations, based on the special needs of users, we present a scheme inspired by co-occurrence SPM \cite {coocctkde} and maximal nonoverlapping SPM \cite {li2022apinm}, called MCoR mining.

\section{Problem definitions}
\label{section:Problem definitions}

\begin{definition}\label{definition1}
  \rm (Sequence) A sequence \textbf{s} with length $n$ is denoted by \textbf{s} = $s_1s_2$ $\cdots$ $s_n$, where $s_i$ (1 $\leq i \leq n$) $\in \sum$, $\sum$ represents a set of items in sequence \textbf{s}, and the size of $\sum$ can be expressed as $|\sum|$. 
\end{definition}

\begin{definition}\label{definition2}
   \rm (Sequence database) A sequence database $D$ with length $k$ is a set of sequences, denoted by $D$ = \{$s_1$, $s_2$, $\cdots,$ $s_k$\}.
\end{definition}

\begin{definition}\label{definition3}
  \rm  (Pattern) A pattern \textbf{p} with length $m$ is denoted by \textbf{p} = $p_1[a,b]$ $p_2$ $\cdots$ $[a,b]p_m$ (or abbreviated as \textbf{p} = $p_1$$p_2$ $\cdots$ $p_{m-1}$$p_m$ with $gap$ = $[a,b])$, where $a$ and $b$ $(0\leq a \leq b)$ are integers indicating the minimum and maximum wildcards between $p_j$ and $p_{j+1}$, respectively.
\end{definition}

\begin{definition}\label{definition4}
   \rm (Occurrence and nonoverlapping occurrence) Suppose we have a sequence \textbf{s} = $s_1$ $s_2$ $\cdots$ $s_n$ and a pattern \textbf{p} = $p_1[a,b]$$p_2 $ $\cdots $$[a,b]p_m$. $l$ = $<$$l_1$, $l_2$, $\cdots$, $l_m$$>$ is an occurrence of pattern \textbf{p} in sequence \textbf{s} if and only if $p_1$ = $s_{l_1}$, $p_2$ = $s_{l_2}$, $\cdots$, $p_m$ = $s_{l_m}$ (0 $\leq l_1$ $\leq l_2$ $\leq \cdots$ $\leq l_m$ $\leq n$) and $a \leq$ $l_i-l_{i-1}-1$ $\leq b$. We assume there is also another occurrence $l^{'}$ = $<$$l_1^{'}$, $l_2^{'}$, $\cdots$, $l_m^{'}$$>$. $l$ and $l^{'}$ are two nonoverlapping occurrences if and only if for any $(1 \leq j \leq m)$, $l_j \neq  l_j^{'}$.
\end{definition}

\begin{example}\label{example3}
 Suppose we have a sequence \textbf{s} = $adbdadcdccabadcd$ and a pattern \textbf{p} = $a[0,3]d$. According to the gap constraint [0,3], all occurrences of \textbf{p} in \textbf{s} are $<$1,2$>$, $<$1,4$>$, $<$5,6$>$, $<$5,8$>$, $<$11,14$>$, $<$13,14$>$ and $<$13,16$>$. $<$1,2$>$ and $<$1,4$>$ do not satisfy the nonoverlapping condition, since 1 appears in these two occurrences in the same position. Thus, there are four nonoverlapping occurrences of \textbf{p} in \textbf{s}, which are $<$1,2$>$, $<$5,6$>$, $<$11,14$>$, and $<$13,16$>$. 
 
 Note that there are many different methods to calculate the nonoverlapping occurrences and the results may be different, such as NETLAP \cite{netlap}, NETGAP \cite{wu2018tcyb}, Netback \cite {li2022apinm}, DFOM \cite {dfom2021kbs}. For example, besides two nonoverlapping occurrences $<$1,2$>$ and $<$5,6$>$, $<$1,4$>$ and $<$5,8$>$ are also two nonoverlapping occurrences. Although the results may be different, this problem has been theoretically proved to be solved in polynomial time \cite{netlap}. More importantly, it has been shown that there are four different ways to find the nonoverlapping occurrences, finding the maximal occurrences in the rightmost leaf-root way, finding the maximal occurrences in the rightmost root-leaf way, finding the minimal occurrences in the leftmost leaf-root way, and finding the minimal occurrences in the leftmost root-leaf way \cite{jos2021}. In this example, $<$1,2$>$ and $<$5,6$>$ are called minimal nonoverlapping occurrences \cite{dfom2021kbs}, while $<$1,4$>$ and $<$5,8$>$ are called maximal nonoverlapping occurrences \cite{netlap}.  In this paper, we search for the minimal nonoverlapping occurrences.
\end{example}

\begin{definition}\label{definition5}
  \rm  (Support) The support of pattern \textbf{p} in sequence \textbf{s} is the number of nonoverlapping occurrences, represented by $sup(\textbf{p},\textbf{s})$. The support of pattern \textbf{p} in sequence database $D$ is the sum of the supports in each sequence, i.e., $sup(\textbf{p},D) $ = $\sum_{l=1}^{k}sup(\textbf{p},\textbf{s}_i)$.
\end{definition}

\begin{definition}\label{definition6}
  \rm (Frequent pattern) If the support of pattern \textbf{p} in sequence \textbf{s} or $D$ is no less than the minimum support threshold $minsup$, then pattern \textbf{p} is a frequent pattern.
\end{definition}

\begin{definition}\label{definition7}
  \rm  (Prefix pattern, subpattern, and superpattern). Given two patterns \textbf{p} = $p_1p_2$ $\cdots$ $p_{m-1}p_m$ and \textbf{q} = $q_1q_2$ $\cdots$ $q_{n-1}q_n$ ($m < n$) with $gap$ = [$a,b$], if and only if $p_1$ = $q_1$, $p_2$ = $q_2$, $\cdots$, and $p_m$ = $q_m$, then pattern \textbf{p} is the prefix pattern of pattern \textbf{q}. Moreover, pattern \textbf{p} is a subpattern of pattern \textbf{q}, and pattern \textbf{q} is a superpattern of pattern \textbf{p}.
\end{definition}

\begin{example}\label{example4}
    Given two patterns \textbf{p} = $ad$ and \textbf{q} = $adbd$, pattern \textbf{p} is the prefix pattern of pattern \textbf{q}.
\end{example}

\begin{definition}\label{definition8}
   \rm (Co-occurrence pattern, co-occurrence rule, rule antecedent, rule consequent, and confidence) Suppose we have a pattern \textbf{q}=\textbf{p}·\textbf{r} = $p_1p_2$ $\cdots$ $p_m q_{m+1}q_{m+2}$ $\cdots$ $q_n$, where \textbf{p} = $p_1p_2$ $\cdots$ $p_m$ and \textbf{r} = $q_{m+1}q_{m+2}$ $\cdots$ $q_n$.   The pattern \textbf{r} is a co-occurrence pattern of \textbf{p}, and \textbf{p} $\to$ \textbf{r} is a co-occurrence rule, where \textbf{p} and \textbf{r} are the rule antecedent and consequent, respectively. The ratio of the supports of patterns \textbf{q} and \textbf{p} is called the confidence of the sequential rule \textbf{p} $\to$ \textbf{r}, and is denoted by \textit{conf}(\textbf{p} $\to$ \textbf{r}) = sup(\textbf{p}·\textbf{r},$D$)/sup(\textbf{p},$D$). The confidence indicates the probability of occurrence of pattern \textbf{r} when pattern \textbf{p} occurs.
\end{definition}

\begin {definition}
\rm(Maximal co-occurrence pattern) Suppose we have a pattern \textbf{q}. If one of its superpattern \textbf{r} is a frequent co-occurrence pattern, then pattern \textbf{q} is not a maximal co-occurrence pattern; otherwise, pattern \textbf{q} is a maximal co-occurrence pattern. 
\end {definition}

\begin{definition}\label{definition9}
  \rm (Strong co-occurrence rule and MCoR) If $conf($\textbf{p} $\to$ \textbf{r}$)$ is greater than or equal to the predefined threshold $mincf$, i.e., $conf($\textbf{p} $\to$ \textbf{r}$)$ $\geq$ $mincf$, then \textbf{p} $\to$ \textbf{r} is a strong co-occurrence rule. Suppose \textbf{p} $\to$ \textbf{r} is a strong co-occurrence rule.  For any superpattern \textbf{w} of pattern \textbf{r}, if \textbf{p} $\to$ \textbf{w} is not a strong co-occurrence rule, then \textbf{p} $\to$ \textbf{r} is an MCoR. 
\end{definition}

\textbf{MCoR mining:} Given a sequence \textbf{s} or sequence dataset $D$, $mincf$, and prefix pattern \textbf{p}, the aim of MCoR mining is to discover all MCoRs. 

\begin{example}\label{example5}
  From Example \ref{example1}, we know that $ad$ $\to$ $c$ and $ad$ $\to$ $cd$ are two co-occurrence rules. Rule $ad$ $\to$ $c$ is not an MCoR, while rule $ad$ $\to$ $cd$ is an MCoR, since the pattern $c$ is the prefix pattern of $cd$ and for any superpattern of $cd$, such as $cdc$, $ad$ $\to$ $cdc$ is not a strong co-occurrence rule. 
\end{example}

The symbols used in this paper are shown in Table \ref{Notation}.

\begin{table}
\centering
    \scriptsize
    \caption{Notations}
       \begin{tabular}{lc}
        \toprule
        {Symbol}     & {Description}  \\
        \midrule
        {\textbf{s}}  & A sequence  with length $n$  \\
        {$D$}  & A sequence database with $k$ sequences  \\
        {\textbf{p}}  & A pattern with length $m$  \\
        {$a,b$}  & The minimum and maximum wildcards, respectively   \\
         {$sup(\textbf{p},\textbf{s})$} & The number of  occurrences of \textbf{p} in \textbf{s}  \\
         {$sup(\textbf{p},D) $}  & The number of  occurrences of \textbf{p} in $D$   \\
         {$\sum $}  & The set of items in sequence database $D$    \\
         {$minsup$}  & The minimum support threshold    \\
         {$mincf$}  & The predefined confidence threshold    \\
          {\textbf{p} $\to$ \textbf{r}}  & A co-occurrence  rule  \\
          {\textit{conf}(\textbf{p} $\to$ \textbf{r})}  & The confidence of the co-occurrence  rule \textbf{p} $\to$ \textbf{r} \\
         
        \bottomrule
    \end{tabular}
	\label{Notation}
\end{table}

\section{Proposed algorithm}
\label{section:Proposed algorithm}

In this section, we propose MCoR-Miner, an algorithm designed to discover all MCoRs. MCoR mining is based on nonoverlapping SPM whose main issue is support calculation. Therefore, support calculation is a key aspect of MCoR mining illustrated in Section \ref {subsection:Support calculation}. In addition to support calculation, MCoR-Miner has three parts: preparation stage, candidate pattern generation, and screening strategy.  The framework of MCoR-Miner is shown in Fig. \ref{Framework}. 



\begin{figure}
    \centering
    \includegraphics[width=0.95\linewidth]{"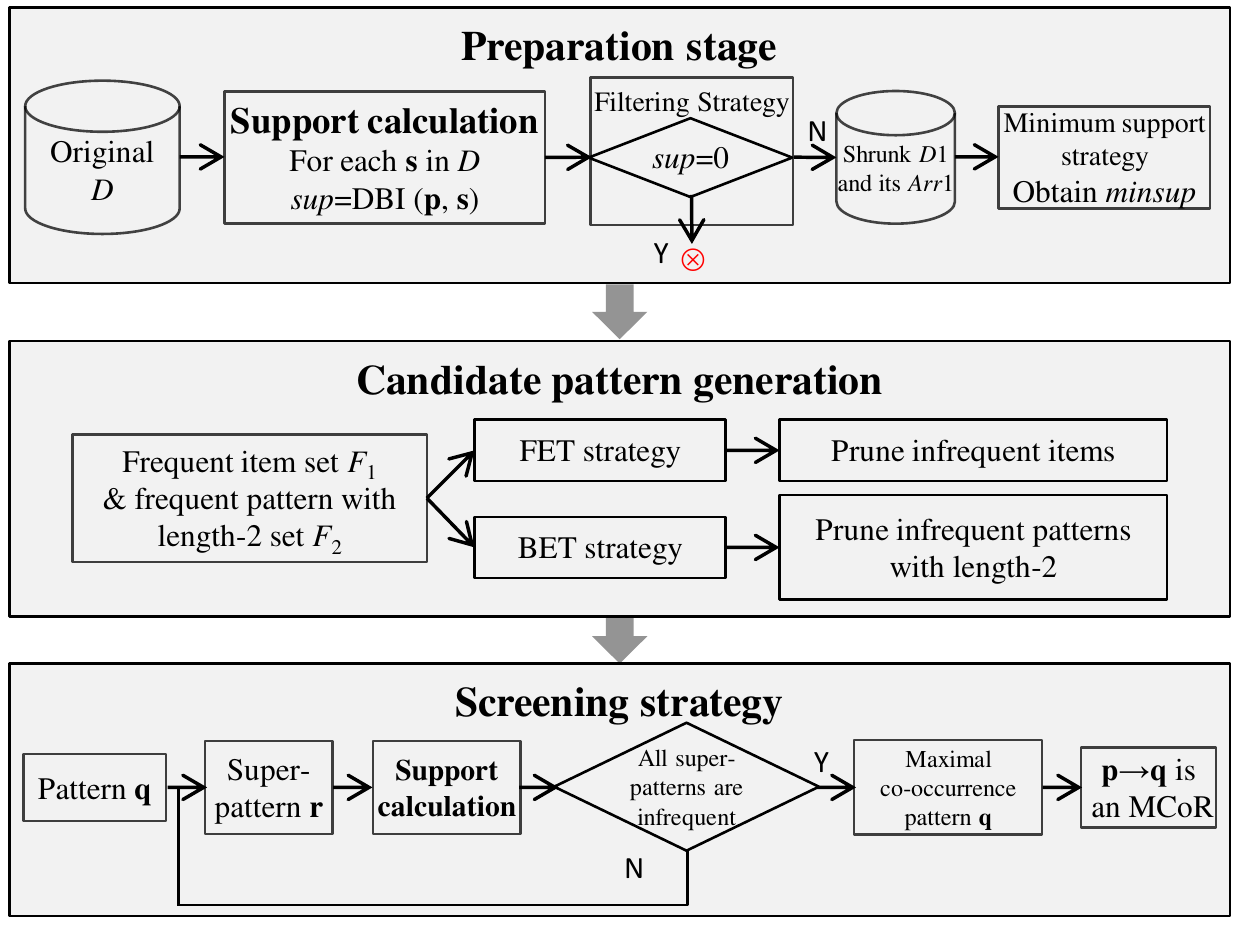"}
    \caption{Framework of MCoR-Miner. In the preparation stage, shown in  Section \ref{subsection: filtering stage}, a filtering strategy is proposed to prune the patterns whose support is zero to shrink the database, and a minimum support strategy is then explored to obtain $minsup$ based on the user-defined parameter $minconf$. In candidate pattern generation, shown in Section \ref{subsection: Candidate pattern generation}, a frequent item enumeration tree (FET) strategy and then a binomial enumeration tree (BET) strategy are developed to reduce the number of candidate patterns. Finally, a screening strategy shown in Section \ref{subsection: Screening strategy} is proposed to discover the maximal co-occurrence patterns to avoid a brute-force search. }
    \label{Framework}
\end{figure}

\subsection{Support calculation}
\label{subsection:Support calculation}

Given a sequence and a pattern with gap constraints, the calculation of its support is a pattern matching task \cite {netlap}. From Definition \ref{definition4}, we know that all nonoverlapping occurrences are a subset of all occurrences, and these occurrences can be expressed using a Nettree structure. Wu et al. \cite{netlap} first theoretically proved that calculating the nonoverlapping occurrences can be solved in polynomial-time. Some state-of-the-art algorithms, such as NETLAP-Best \cite {netlap} and NETGAP \cite {wu2018tcyb}, initially create a Nettree and then iteratively prune useless nodes to find all nonoverlapping occurrences on the Nettree. The time complexities of NETLAP-Best and NETGAP are both $O(m \times m \times n \times w)$, where $m$, $n$, and $w$ are the length of pattern and sequence, and $b-a+1$, respectively. Since pruning these useless nodes will consume a lot of time, Netback \cite {li2022apinm} was proposed to improve the efficiency, in which a Nettree is first created and a backtracking strategy is then employed to find all nonoverlapping occurrences on the Nettree. The time complexity of NetBack is reduced to $O(m \times n \times w)$. Although a Nettree can intuitively represent all occurrences, it contains a lot of useless information when representing all nonoverlapping occurrences, such as useless nodes and parent-child relationships. To further improve the efficiency, DFOM \cite {dfom2021kbs} was proposed; this algorithm does not need to create a whole Nettree, and employs depth-first search and backtracking strategies to find all nonoverlapping occurrences. One of the shortcomings of DFOM is that it employs a sequential searching strategy to find the feasible child nodes of each current node. The time complexity of DFOM is  $O(m \times n)$, since DFOM also employs the depth-first and backtracking strategies without creating a whole Nettree. To overcome this drawback, we propose a depth-first search and backtracking with indexes algorithm, called DBI, which employs an indexing mechanism to avoid sequential searching. Example \ref{example6} illustrates the principle of DFOM \cite {dfom2021kbs}.


\begin{example}\label{example6}
We use the same sequence \textbf{s} = $adbdadcdccabadcd$ as in Example \ref{example3} and a pattern \textbf{p} = $a[0,3]d[0,3]c$. Fig. \ref{DFOM} illustrates the occurrences searching process of DFOM. We know that $p_1$ = $a$. Thus, DFOM sequentially searches for $a$ in sequence \textbf{s}. Now, DFOM creates a root, labeled node 1, since $s_1$ = $p_1$ = $a$. According to the depth-first search strategy, DFOM sequentially searches for $d$ in sequence \textbf{s} after node 1 with gap constraints $[0,3]$, since $p_2$ = $d$. We know that $s_2$ = $d$. Thus, DFOM creates a child of node 1, labeled node 2. Since $p_3$ = $c$, DFOM sequentially searches for $c$ in sequence \textbf{s} after node 2 with gap constraints $[0,3]$. Unfortunately, there is no $c$ between $s_3$ and $s_6$. Hence, using a backtracking strategy, DFOM backtracks node 1 to find a new child. We know that $s_3$ is $b$, which is not equal to $p_2$ = $d$, and DFOM, therefore, continues to search. We know that $s_4$ is $d$ which is equal to $p_2$ = $d$. Thus, DFOM finds a new child of node 1, labeled node 4. Following a depth-first search strategy, DFOM searches for $c$ in sequence \textbf{s} after node 4 with gap constraints $[0,3]$, since $p_3$ = $c$. It is easy to see that $s_5$ and $s_6$ are not equal to $c$. Since $s_7$ = $c$ is equal to $p_3$ and the gap between positions 7 and 4 is 2, this satisfies the gap constraints $[0,3]$. Thus, DFOM finds a new child of node 4, labeled node 7. Since the length of \textbf{p} is three, DFOM finds an occurrence $<$1,4,7$>$. Now, DFOM continues to search for a new root. Since $s_2$, $s_3$, and $s_4$ are not equal to $a$, these characters are ignored. Since $s_5$ is $a$, DFOM finds a new root, labeled node 5. By iterating the above process, DFOM finds a new nonoverlapping occurrence $<$5,6,9$>$. Finally, occurrence $<$11,14,15$>$ is found.
\begin{figure}
    \centering    \includegraphics[width=0.7\linewidth]{"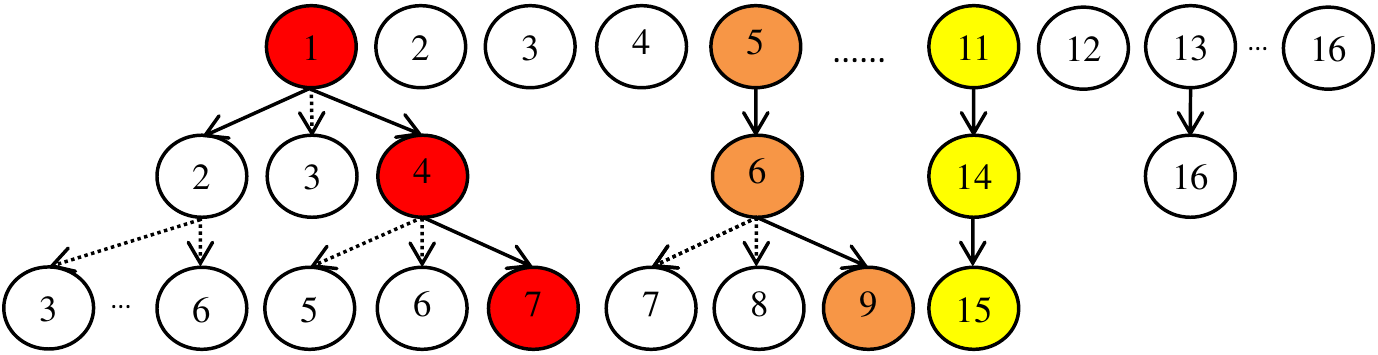"}
    \caption{Occurrences searching process of DFOM}
    \label{DFOM}
\end{figure}

\end{example}

From Example \ref{example6} and Fig. \ref{DFOM}, we know that the main drawback of DFOM is its sequential search. To overcome this shortcoming, DBI uses index arrays to store the positions of each character. Example \ref{example7}  illustrates the principle of DBI.

\begin{figure}
    \centering    \includegraphics[width=0.6\linewidth]{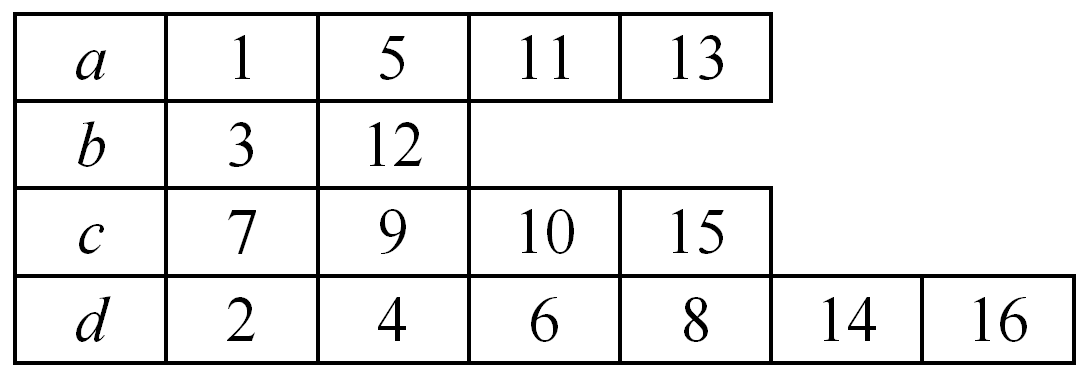}
    \caption{Index arrays for sequence \textbf{s}.}
    \label{The index arrays of sequence}
\end{figure}


\begin{example}\label{example7}
We use the same sequence  \textbf{s} = $adbdadcdccabadcd$ and pattern \textbf{p} = $a[0,3]d[0,3]c$ as in Example \ref{example6}.  DBI uses index arrays to store the positions of each character, as shown in Fig. \ref{The index arrays of sequence}. Fig. \ref{DBI} shows the occurrences searching process of DBI.  Since $p_1$ = $a$, DBI gets the first element in the array $a$, which is position 1. Thus, DBI creates a root, labeled node 1. Following a depth-first search strategy, since $p_2$ = $d$, DBI gets the first element in the array $d$, which is 2, and the gap between positions 2 and 1 is zero, which satisfies the gap constraints $[0,3]$. Thus, DBI creates a child of node 1, labeled node 2. According to a depth-first search strategy, since $p_3$ = $c$, DBI gets the first element in the array $c$, which is position 7. The gap between positions 7 and 2 is four, which does not satisfy the gap constraints $[0,3]$. According to the backtracking strategy, DBI backtracks node 1 to find a new child. We know that the second element of the array $d$ is position 4, which satisfies the gap constraints $[0,3]$. Thus, DBI creates a child of node 1, labeled node 4. Now, DBI selects the first element of the array $c$, which is position 7, and it satisfies the gap constraints $[0,3]$. Hence, DBI obtains the occurrence $<$1,4,7$>$. Then, DBI gets the second element of array $a$, which is position 5. By iterating the above process, DBI obtains the occurrence $<$5,6,9$>$. Finally, $<$11,14,15$>$ is found.
\begin{figure}
    \centering    \includegraphics[width=0.5\linewidth]{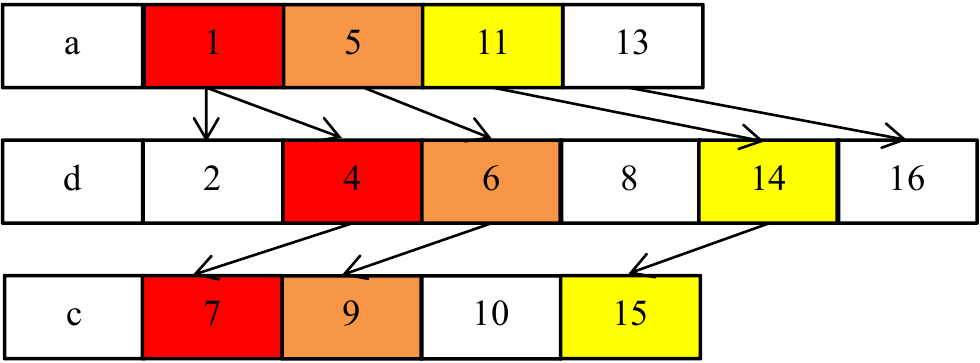}
    \caption{Occurrences searching process of DBI}
    \label{DBI}
\end{figure}
 
\end{example}

From Example \ref {example7}, we know that DBI first creates $m$ level nodes according to the index arrays of pattern \textbf{p}. Then, based on the index arrays, DBI adopts the depth-first search and backtracking strategies to iteratively find the minimal nonoverlapping occurrences. The pseudocode of DBI is shown in Algorithm 1. The main steps of DBI  are as follows.



\begin{enumerate}[Step 1:]

\item DBI selects the first element in the index array of $p_1$ as the current node, and set the current level as the first level (Lines 1 to 3).

\item DBI gets the first unused element in the next level as the child node of the current node, and the distance between the current node and the child node satisfies the gap constraints $gap$=[$a,b$] (Lines 5 to 10).  

\item If DBI successfully finds the child node, then the child node is selected as the current node and the next level is set as the current level. Otherwise, DBI backtracks to the parent node of the current node and searches for the next child node  (Lines 11 to 17).

\item Iterate Step 3 until DBI reaches the $m$-th level or the first level. If DBI reaches the $m$-th level, then DBI finds an occurrence, i.e., $sup(\textbf{p},\textbf{s})$++  (Lines 19 to 21).

\item DBI selects the next element in the index array of $p_1$ as the current node, and sets the current level as the first level.

\item Iterate Steps 2 to 5 until all elements in the index array of $p_1$ are checked.

\end{enumerate}

	
\begin{algorithm}[htb]	
\label{Algorithm 1}
\caption{DBI}
\begin{algorithmic}[1]
\REQUIRE  Index arrays \textbf{arr} for sequence \textbf{s}, pattern \textbf{p}, and $gap$=$[a,b]$
\ENSURE \textit{sup}(\textbf{p},\textbf{s})
\FOR {$root$ $\gets$ each element in \textbf{arr}[\textbf{p}[1]]}
    \STATE $CurNode$ $\gets$ \textbf{occ}[1] $\gets$ $root$;
    \STATE $CurLevel$ $\gets$ 1;
	\WHILE {$CurLevel>0$ $\&\&$ $CurLevel<len(\textbf{p})$}   

\STATE  $child$ $\gets$ next element in \textbf{arr}[\textbf{p}[$CurLevel+1]]$;  
\STATE  $gap$ $\gets$ $child-CurNode-1$;

\WHILE {$gap<a$}
       \STATE $child$ $\gets$  next element in \textbf{arr}[\textbf{p}[$CurLevel+1]]$; 
        \STATE  $gap$ $\gets$ $child-CurNode-1$;
   \ENDWHILE
		\IF {$a \leq gap$ $\&\&$ $gap \leq b$}
    
    \STATE $CurLevel$++;  

\STATE $CurNode$ $\gets$ \textbf{occ}$[CurLevel]$ $\gets$  $child$;

   \ELSE 
\STATE $CurLevel--$; 
\STATE $CurNode$ $\gets$ \textbf{occ}$[CurLevel]$;
	    \ENDIF
    \ENDWHILE
  	\IF{$CurLevel$ = len(\textbf{p})}
    \STATE $sup(\textbf{p},\textbf{s})++$;
    \ENDIF
  \ENDFOR
\RETURN $sup(\textbf{p},\textbf{s})$
\end{algorithmic}
\end{algorithm}


\begin{theorem}\label {theorem:timecomplexityDBI}
The time and space complexities of DBI are both $O(m \times n/h)$, where $m$, $n$, and $h$ are the lengths of pattern and sequence, and the size of  $\sum$, respectively.
\end{theorem}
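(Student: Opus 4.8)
The plan is to analyze Algorithm~1 (DBI) directly, separating the space bound from the time bound. For the space bound, I would observe that DBI maintains only three kinds of data beyond the input: the current occurrence array \textbf{occ} of length $m$, the pattern \textbf{p} of length $m$, and the index arrays \textbf{arr}. The occurrence array and pattern contribute $O(m)$. The index arrays partition the $n$ positions of \textbf{s} among the $h = |\sum|$ distinct items, so together they store exactly $n$ position values; hence the index-array storage is $O(n)$. Strictly, $O(m) + O(n)$ is $O(m+n)$, not $O(m\times n/h)$, so I would need to either (i) read the claim as the per-array or amortized cost under the implicit assumption that items are roughly evenly distributed (each of the $h$ index arrays then has length about $n/h$, and the algorithm only ever advances a pointer within one such array per level, touching $O(m)$ arrays each of size $O(n/h)$), or (ii) note that in the regime $n \gg h$ the dominant term is treated as $n/h$ per relevant structure. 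I would make this even-distribution assumption explicit and phrase the space argument as: each of the $m$ levels is associated with one index array of expected size $n/h$, and the working pointers into these arrays give the $O(m\times n/h)$ figure.

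For the time bound, the key structural fact is that the pointer into \textbf{arr}[\textbf{p}[$CurLevel+1$]] is \emph{monotone}: once an element of an index array is consumed (as a child at some level), the scan for the next child at that level resumes from the following element and never rewinds. This is exactly the ``minimal nonoverlapping occurrence'' discipline from Definition~\ref{definition4} and Example~\ref{example7}. Therefore, summed over the entire execution, the total number of elements examined in the index array attached to a fixed level $j$ is at most the length of that array, i.e.\ $O(n/h)$ under the even-distribution assumption. Since there are $m$ levels, the total work in the inner \textbf{while} loops (Lines 5--10 and the body 11--17) is $O(m \times n/h)$. I would also check that the backtracking steps (Line~15--16) are charged against forward steps: each decrement of $CurLevel$ is paired with a previous increment, so backtracking does not change the asymptotics. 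The outer \textbf{for} loop over roots (elements of \textbf{arr}[\textbf{p}[1]]) is subsumed because all those root elements lie in a single index array of size $O(n/h)$ and are likewise consumed monotonically.

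The step I expect to be the main obstacle is making the monotonicity/amortization argument airtight, specifically arguing that a pointer at level $j$ is never reset when we backtrack into level $j$ from above and then descend again. I would need to confirm from the pseudocode that the ``next element'' cursor for each level persists across backtracks within one root's search (and is reset only when a new root is taken at Line~1), and then argue that even across different roots the cursors only move forward because the nonoverlapping condition forbids reusing a position at a given level. If the cursors \emph{were} reset per root, the bound would degrade, so pinning down this implementation detail — and hence justifying that the total forward motion at each level is bounded by that level's index-array length — is the crux. The secondary subtlety is simply being upfront that $O(m\times n/h)$ is a statement under the uniform-item-distribution heuristic (worst case is $O(m+n)$ space and $O(mn)$ time when one item dominates), which I would flag explicitly so the bound is not overclaimed.
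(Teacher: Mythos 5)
Your proposal is correct and takes essentially the same route as the paper, whose entire argument is that each of the $m$ index arrays has size $O(n/h)$ and that "each element in the index arrays can be used at most once," i.e.\ exactly your monotone-cursor amortization. Your added caveats — that the $O(n/h)$ per-array size presumes a roughly even item distribution (worst case being $O(m+n)$ space and $O(mn)$ time), and that the cursors must persist across backtracks for the amortization to hold — are refinements the paper's proof silently assumes rather than a different approach.
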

\begin{proof}
It is easy to know that the size of the index array of each item is $O(n/h)$, since the length of the sequence is $n$, and the size of items is $h$. To calculate the support, there are $m$ index arrays. We know that each element in the index arrays can be used at most once, which means that DBI visits these elements one by one. Hence, the time and space complexities of DBI are both $O(m \times n/h)$. 
\end {proof}


\subsection{Preparation  stage}\label{subsection: filtering stage}

In this section, we propose two strategies: filtering strategy and minimum support strategy to avoid some redundant support calculations. We first propose the filtering strategy. 

\textbf {Filtering strategy}. If the support of pattern \textbf{p} in sequence \textbf{s} is zero, then sequence \textbf{s} can be pruned.

\begin{theorem} \label {theorem:filter}
Filtering strategy is correct and complete.
\end{theorem}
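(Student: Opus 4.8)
The plan is to establish both correctness and completeness separately, as these correspond to two natural directions. For \emph{correctness}, I would argue that pruning a sequence $\mathbf{s}$ with $sup(\mathbf{p},\mathbf{s}) = 0$ does not cause the algorithm to report any spurious co-occurrence pattern or rule. The key observation is that every co-occurrence pattern of $\mathbf{p}$ is, by Definition~\ref{definition7} and Definition~\ref{definition8}, a superpattern $\mathbf{q} = \mathbf{p} \cdot \mathbf{r}$ having $\mathbf{p}$ as its prefix pattern. So I would show: if $sup(\mathbf{p},\mathbf{s}) = 0$, then for any superpattern $\mathbf{q}$ of $\mathbf{p}$ we also have $sup(\mathbf{q},\mathbf{s}) = 0$. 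This follows because any nonoverlapping occurrence $\langle l_1, \dots, l_n \rangle$ of $\mathbf{q}$ in $\mathbf{s}$ restricts (on its first $m$ coordinates) to an occurrence $\langle l_1, \dots, l_m \rangle$ of $\mathbf{p}$ in $\mathbf{s}$, since $p_j = q_j = s_{l_j}$ for $j \le m$ and the gap constraints on consecutive positions are inherited. Hence a single occurrence of $\mathbf{q}$ would force at least one occurrence of $\mathbf{p}$, contradicting $sup(\mathbf{p},\mathbf{s}) = 0$. Therefore $\mathbf{s}$ contributes $0$ to $sup(\mathbf{q},D)$ for every candidate $\mathbf{q}$, and removing it leaves every relevant support value $sup(\mathbf{q},D) = \sum_i sup(\mathbf{q},\mathbf{s}_i)$ unchanged.

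For \emph{completeness}, I would argue the converse: no co-occurrence pattern or MCoR is lost by the pruning. Since the pruned sequences contribute nothing to $sup(\mathbf{p},D)$ either (again by $sup(\mathbf{p},\mathbf{s}) = 0$), neither the antecedent support nor any consequent support $sup(\mathbf{p}\cdot\mathbf{r},D)$ changes. Consequently the confidence $conf(\mathbf{p} \to \mathbf{r}) = sup(\mathbf{p}\cdot\mathbf{r},D)/sup(\mathbf{p},D)$ is identical whether computed over $D$ or over the filtered database, so exactly the same rules clear the $mincf$ threshold, and the maximality condition in Definition~\ref{definition9} (which is defined purely in terms of which rules are strong) is likewise preserved. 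Thus the set of MCoRs mined from the filtered database equals the set mined from $D$.

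I do not expect a serious obstacle here; the statement is essentially a monotonicity/antimonotonicity fact. The one point that needs a little care is making the ``restriction of an occurrence'' argument precise: one must check that the gap constraint $a \le l_j - l_{j-1} - 1 \le b$ for $2 \le j \le m$ holds for the truncated tuple because it is a subset of the constraints satisfied by the full occurrence of $\mathbf{q}$, and that the nonoverlapping bookkeeping is irrelevant since we only need the \emph{existence} of one occurrence of $\mathbf{p}$ to reach the contradiction. A secondary subtlety worth a sentence is that the database case reduces to the per-sequence case by Definition~\ref{definition5}, so it suffices to reason sequence-by-sequence.
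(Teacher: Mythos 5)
Your proposal is correct and follows essentially the same route as the paper: both reduce the claim to the anti-monotonicity of nonoverlapping support (a superpattern's support cannot exceed its subpattern's), so $sup(\mathbf{p},\mathbf{s})=0$ forces $sup(\mathbf{q},\mathbf{s})=0$ for every superpattern $\mathbf{q}$, making the pruned sequence irrelevant to all later support and confidence computations. The only difference is that you prove the needed monotonicity step directly via the occurrence-restriction argument, whereas the paper simply cites it from prior work.
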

\begin{proof}
We know that the nonoverlapping SPM satisfies anti-monotonicity \cite {wu2018tcyb}, which means that the support of a superpattern is no greater than that of its sub-pattern. If $sup(\textbf{p},\textbf{s})$ = 0, then we can safely say that $sup(\textbf{q},\textbf{s}) $ = 0, where pattern \textbf{q} is a superpattern of pattern \textbf{p}, since the nonoverlapping SPM satisfies the anti-monotonicity condition and sequence \textbf{s} can therefore be pruned. Hence, the filtering strategy is correct and complete.
\end {proof}

Moreover, based on this anti-monotonicity, we further propose a minimum support strategy. 


\textbf{Minimum support strategy}. If the support of a pattern \textbf{q} in $D$ is less than $minsup$, then pattern \textbf{q} and its superpatterns can be pruned, where $minsup$ = $sup(\textbf{p}, D) $ $\times $ $mincf$. 

\begin{theorem} \label {theorem:minimum}
Minimum support strategy is correct and complete.
\end {theorem}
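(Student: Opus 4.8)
The plan is to reduce the claim to the defining inequality of a strong co-occurrence rule together with the anti-monotonicity of nonoverlapping support already used in Theorem~\ref{theorem:filter}. First I would restate the confidence condition in terms of support: since $conf(\textbf{p}\to\textbf{r}) = sup(\textbf{p}\cdot\textbf{r},D)/sup(\textbf{p},D)$ and $sup(\textbf{p},D)$ is a fixed positive number, the rule $\textbf{p}\to\textbf{r}$ is strong if and only if $sup(\textbf{p}\cdot\textbf{r},D) \ge sup(\textbf{p},D)\times mincf = minsup$. Writing $\textbf{q}=\textbf{p}\cdot\textbf{r}$, this says exactly that $\textbf{p}\to\textbf{r}$ is strong precisely when $\textbf{q}$ is frequent in the sense of Definition~\ref{definition6} for the threshold $minsup$. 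Throughout, the only patterns that matter are those having $\textbf{p}$ as a prefix, since by Definition~\ref{definition8} only such a pattern induces a co-occurrence rule with antecedent $\textbf{p}$.

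For correctness, I would take any pattern $\textbf{q}=\textbf{p}\cdot\textbf{r}$ pruned by the strategy, i.e. with $sup(\textbf{q},D) < minsup$. By the equivalence above, $\textbf{p}\to\textbf{r}$ is not even a strong co-occurrence rule, hence not an MCoR. For the superpatterns of $\textbf{q}$: by Definition~\ref{definition7} any superpattern $\textbf{w}$ of $\textbf{q}$ has $\textbf{q}$ (and therefore $\textbf{p}$) as a prefix, so $\textbf{w}=\textbf{p}\cdot\textbf{r}'$ for some $\textbf{r}'$ extending $\textbf{r}$. Anti-monotonicity gives $sup(\textbf{w},D)\le sup(\textbf{q},D) < minsup$, so $conf(\textbf{p}\to\textbf{r}') < mincf$ and $\textbf{p}\to\textbf{r}'$ is not a strong co-occurrence rule, hence not an MCoR either. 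Thus every pattern the strategy discards is genuinely irrelevant to the output set of MCoRs.

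For completeness, I would argue by contraposition. Let $\textbf{p}\to\textbf{r}$ be an arbitrary MCoR and $\textbf{q}=\textbf{p}\cdot\textbf{r}$. Being an MCoR it is in particular a strong co-occurrence rule, so $sup(\textbf{q},D)\ge minsup$ and $\textbf{q}$ is not removed by the direct part of the pruning. It also cannot be removed as a superpattern of some pruned pattern: any pattern of which $\textbf{q}$ is a superpattern is a subpattern of $\textbf{q}$, hence by anti-monotonicity has support at least $sup(\textbf{q},D)\ge minsup$ and is itself never pruned. Hence $\textbf{q}$ survives the strategy, and no MCoR is lost. Combining the two directions yields that the minimum support strategy is correct and complete.

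I expect the only subtle point to be the bookkeeping around the phrase ``its superpatterns'': one must check that a superpattern of $\textbf{q}=\textbf{p}\cdot\textbf{r}$ always induces a co-occurrence rule with the same antecedent $\textbf{p}$ and a consequent extending $\textbf{r}$, so that anti-monotonicity of $sup(\cdot,D)$ transfers directly to the confidences. Once that is spelled out, each remaining step is a one-line inequality.
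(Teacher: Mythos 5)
Your proposal is correct and rests on the same key observation as the paper's own proof: the confidence condition $conf(\textbf{p}\to\textbf{r})\ge mincf$ is algebraically equivalent to $sup(\textbf{p}\cdot\textbf{r},D)\ge sup(\textbf{p},D)\times mincf = minsup$, so the confidence threshold can be replaced by a support threshold. Your write-up is in fact somewhat more careful than the paper's, which stops after this equivalence, whereas you also explicitly use anti-monotonicity to justify the ``and its superpatterns'' clause of the pruning and to check that no MCoR is ever discarded; these are sound additions, not a different method.
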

\begin{proof}
Suppose a rule \textbf{p} $\to$ \textbf{r} is a strong co-occurrence rule, i.e., $conf($\textbf{p} $\to$ \textbf{r}) = $sup(\textbf{p}$·$\textbf{r},D)$/$sup(\textbf{p}, D)$ $\geq$ $mincf$. Then, $sup(\textbf{p}$·$\textbf{r},D) \geq sup(\textbf{p}, D)$ $\times $ $mincf$, which means that the support of pattern $\textbf{p}$·$\textbf{r}$ is greater than or equal to $sup(\textbf{p}, D) \times mincf$. Thus, for MCoR mining, users do not need to set $minsup$, since it can be automatically calculated based on $mincf$. Hence, the minimum support strategy is correct and complete.
\end{proof}

In the preparation stage, we propose the SDB-Filt algorithm whose pseudocode is shown in Algorithm \ref{Algorithm 2}. The main steps of SDB-Filt are as follows.

\begin{enumerate}[Step 1:]

\item Select a sequence \textbf{s} in $D$.

\item SDB-Filt employs the DBI algorithm to calculate $sup(\textbf{p},\textbf{s})$.

\item If $sup(\textbf{p},\textbf{s})$ is zero, then sequence \textbf{s} can be shrunk according to filtering strategy. Otherwise, sequence \textbf{s} is stored in a shrunk database $D1$, and the support is updated, i.e., $sup(\textbf{p},D)$ gets $sup(\textbf{p},D)$ + $sup(\textbf{p},\textbf{s})$  (Lines 4 to 7).

\item Iterate Steps 1 to 3 until all sequences are calculated.

\item After getting $sup(\textbf{p},D)$, SDB-Filt calculates $minsup$ according to the minimum support strategy  (Line 9).
\end{enumerate}

\begin{algorithm}[htb]
\caption{SDB-Filt}\label{Algorithm 2}
\begin{algorithmic}[1]
 \REQUIRE  Sequence index arrays \textbf{Arr}, pattern \textbf{p}, $gap$=$[m,n]$, and $mincf$
	\ENSURE $minsup$ and shrunk index arrays $Arr1$
  
\STATE $sup$(\textbf{p},$D$) $\gets$ 0;
\FOR {each sequence \textbf{s} in $D$}
    \STATE $support$ $\gets$ DBI(\textbf{arr$_{\textbf{s}}$},\textbf{p}, $gap$);\\     
    \IF{$support \neq 0$} 
    \STATE  Store sequence index arrays \textbf{s} in $Arr1$;\\
   \STATE  $sup($\textbf{p},$D)$ $\gets$ $sup($\textbf{p},$D)$ $+support$;
    \ENDIF
\ENDFOR

  \STATE $minsup$ $\gets$ $sup(\textbf{p},D)$ $\times$  $mincf$;
  \RETURN $minsup$, $Arr1$
\end{algorithmic}
\end{algorithm}

\subsection{Candidate pattern generation}\label{subsection: Candidate pattern generation}

There are two strategies that are commonly used for generating candidate patterns: the enumeration tree strategy and the pattern join strategy. Many nonoverlapping SPM methods have adopted the pattern join strategy to generate candidate patterns, such as NOSEP \cite {wu2018tcyb} and NTP-Miner \cite {wu2021ntpm}, since it can effectively reduce the candidate patterns compared to the enumeration tree strategy. However, the pattern join strategy uses all frequent patterns with length $m$ to generate candidate patterns with length $m+1$, which means that we need to mine all of the frequent patterns. Obviously, a brute-force algorithm is used to mine all frequent patterns and then pattern \textbf{q} and its superpatterns are found among them. We therefore adopt the enumeration tree strategy. In the classical enumeration tree strategy, if pattern \textbf{q} is a frequent pattern, then all patterns $\textbf{q}$·$a$ are candidate patterns, where $a$ is any item in sequence \textbf{s}. 


To reduce the number of candidate patterns, we first propose the FET strategy and then present the BET strategy. 

\textbf{The FET strategy:} If pattern \textbf{q} is a frequent pattern and $y$ is any frequent item in sequence \textbf{s}, then all patterns $\textbf{q}$·$y$ are candidate patterns.

Example \ref{example8} demonstrates the advantage of the FET strategy.

\begin{example}\label{example8}
Suppose we have a sequence \textbf{s} = $s_1s_2s_3$$s_4s_5s_6$ $s_7s_8s_9s_{10}$ $s_{11}s_{12}$$s_{13}s_{14}$$s_{15}s_{16}$ = $adbdadcdc$$cabadcd$, a pattern \textbf{q} = $a[0,3]d[0,3]c$, and $minsup$ = 3. From sequence \textbf{s}, we know that all of the items are \{$a,b,c,d$\}. According to the classical enumeration tree strategy, there are four candidate patterns based on pattern \textbf{q}: $a[0,3]d[0,3]c[0,3]a$, $a[0,3]d[0,3]c[0,3]b$, $a[0,3]d[0,3]c[0,3]c$, and $a[0,3]d[0,3]c[0,3]d$. However, we know that the supports of patterns $a$, $b$, $c$, and $d$ are four, two, four, and six, respectively. Thus, the frequent items are \{$a,c,d$\}, since $minsup$ = 3. Hence, according to FET, there are only three candidate patterns based on pattern \textbf{q}: $a[0,3]d[0,3]c[0,3]a$, $a[0,3]d[0,3]c[0,3]c$, and $a[0,3]d[0,3]c[0,3]d$. This example shows that the FET strategy has better performance than the classical enumeration tree strategy. 
\end{example}

To further reduce the number of candidate patterns, inspired by the pattern join strategy, we propose the BET strategy as follows. 

\textbf{The BET strategy:} We discover all frequent patterns with length two and store them in set $F_2$. If $x$ is the last item of \textbf{q}, and pattern $xy$ is a frequent pattern with length two, i.e. $xy$ $\in$ $F_2$, then  pattern \textbf{r} = $\textbf{q}$·$y$ is a candidate pattern.

Example \ref{example9} illustrates the advantage of the BET strategy.

\begin{example}\label{example9}
  We use the same scenario as in Example \ref{example8}. We know that the frequent items are \{$a,c,d$\}. We can then generate nine candidate patterns with length two \{$aa$, $ac$, $ad$, $ca$, $cc$, $cd$, $da$, $dc$, $dd$\} using the pattern join strategy. It is then easy to see that the frequent candidate patterns with length two are $<$ad$>$, $<$ca$>$, $<$cd$>$, $<$dc$>$, and $<$dd$>$. From Example \ref{example8}, we see that there are only three candidate patterns based on pattern \textbf{q} using the FET strategy: $a[0,3]d[0,3]c[0,3]a$, $a[0,3]d[0,3]c[0,3]c$, and $a[0,3]d[0,3]c[0,3]d$. Pattern $a[0,3]d[0,3]c[0,3]a$ cannot be pruned by the BET strategy, since pattern $c[0,3]a$ is a frequent pattern with length two, while pattern $a[0,3]d[0,3]c[0,3]c$ can be pruned by the BET strategy, since pattern $c[0,3]c$ is not a frequent pattern with length two. Similarly, pattern $a[0,3]d[0,3]c[0,3]d$ cannot be pruned. Thus, only two candidate patterns are generated by the BET strategy. This example demonstrates that the BET strategy outperforms the FET strategy. A comparison of the candidate patterns is given in Fig. \ref{Comparison of candidate patterns generated by different strategies}.

\begin{figure*}[h]
		\centering
		\subfigure[Enumeration tree of all items]{\includegraphics[width=0.32\textwidth]{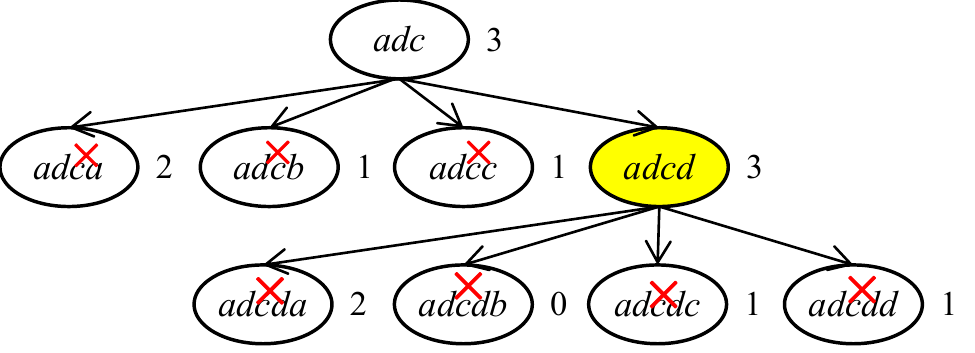}}
		\subfigure[Enumeration tree of FET]{\includegraphics[width=0.32\textwidth]{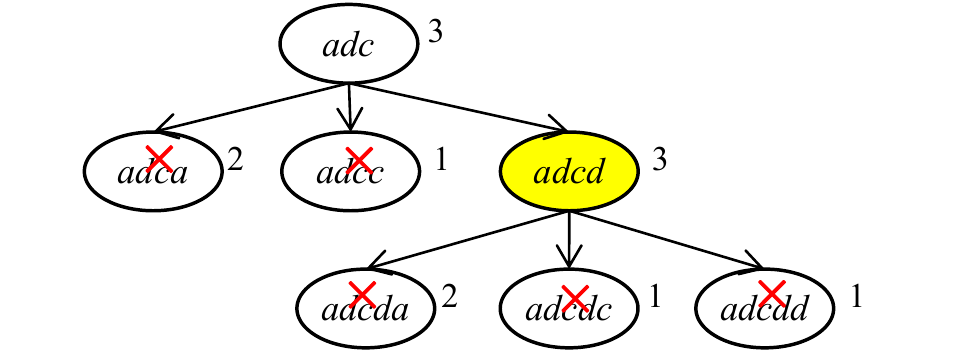}}
		\subfigure[Enumeration tree of BET]{\includegraphics[width=0.32\textwidth]{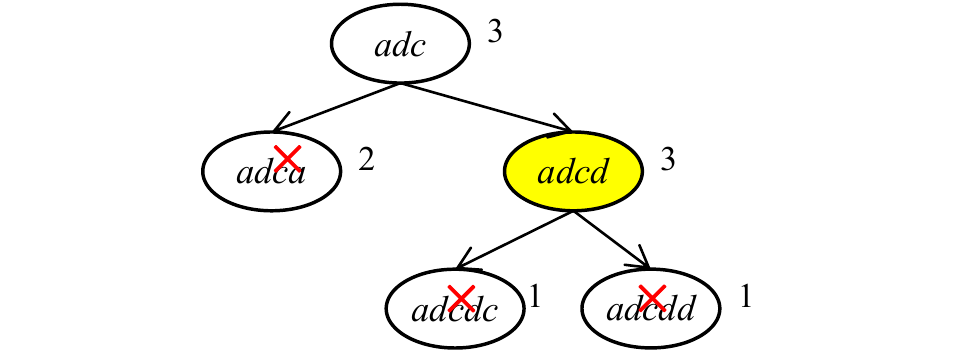}}
\caption{Comparison of candidate patterns generated by different strategies. Eight candidate patterns are generated by using all terms, while six and four candidate  patterns are generated by using the FET and BET strategies, respectively. }
		\label{Comparison of candidate patterns generated by different strategies}
	\end{figure*}
\end{example}

\begin{theorem} \label {theorem:fetbet}
Both FET and BET strategies are correct and complete.
\end {theorem}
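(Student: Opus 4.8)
The plan is to establish two things for each strategy: \emph{completeness} (no pattern that could be part of an MCoR — in particular, no frequent co-occurrence pattern — is ever pruned) and \emph{correctness} (every candidate pattern generated is legitimately a candidate, i.e. the strategy never claims a pattern is frequent when it is not; more precisely, the strategy still examines every pattern it needs to and the generated set is a superset of the frequent patterns). Since both strategies are restrictions of the classical enumeration tree strategy, which is already known to be complete, it suffices to show that the extra pruning only removes patterns that are guaranteed to be infrequent, hence cannot be co-occurrence patterns and cannot be (the consequent part of) an MCoR.

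First I would handle the FET strategy. The claim is that if $\mathbf{q}$ is frequent and we extend it by an item $y$ that is \emph{not} a frequent item (i.e. $sup(y,\mathbf{s}) < minsup$), then $\mathbf{q}\cdot y$ cannot be frequent, so discarding it loses nothing. This follows immediately from anti-monotonicity of nonoverlapping support (Theorem~\ref{theorem:filter}'s underlying fact, cited from \cite{wu2018tcyb}): the single item $y$ is a subpattern of $\mathbf{q}\cdot y$, so $sup(\mathbf{q}\cdot y,\mathbf{s}) \le sup(y,\mathbf{s}) < minsup$. Conversely, the FET strategy still generates \emph{every} extension $\mathbf{q}\cdot y$ with $y$ frequent, so every frequent length-$(m+1)$ extension of a frequent $\mathbf{q}$ is retained; by induction on pattern length, every frequent co-occurrence pattern is produced, giving completeness.

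Next I would handle the BET strategy, which prunes $\mathbf{q}\cdot y$ unless $xy \in F_2$ where $x$ is the last item of $\mathbf{q}$. The key observation is that $xy$ (the length-two pattern consisting of the last item of $\mathbf{q}$ followed by $y$, with the same gap constraint) is a subpattern of $\mathbf{q}\cdot y$ in the sense relevant here — a suffix — and nonoverlapping support is also anti-monotone with respect to taking suffixes. The reason is that any set of nonoverlapping occurrences of $\mathbf{q}\cdot y$ restricts (by keeping only the last two coordinates) to a set of occurrences of $xy$ that are still pairwise nonoverlapping in those two coordinates, so $sup(\mathbf{q}\cdot y,\mathbf{s}) \le sup(xy,\mathbf{s})$. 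Hence if $xy \notin F_2$, i.e. $sup(xy,\mathbf{s}) < minsup$, then $\mathbf{q}\cdot y$ is infrequent and may be safely pruned; and since the strategy keeps every extension whose defining suffix $xy$ is frequent, no frequent pattern is lost — completeness again follows by induction on length. I would then conclude that both generated candidate sets are supersets of the set of frequent patterns (correctness) while no frequent pattern is ever discarded (completeness), which is exactly the statement of Theorem~\ref{theorem:fetbet}.

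The main obstacle I anticipate is the suffix anti-monotonicity step for BET: unlike the prefix/subpattern case, which is quoted directly from prior work, the claim ``restricting nonoverlapping occurrences of $\mathbf{q}\cdot y$ to their last two positions yields nonoverlapping occurrences of $xy$'' needs a careful check against Definition~\ref{definition4}, because the nonoverlapping condition is coordinatewise ($l_j \neq l_j'$ for all $j$) and one must verify that distinctness of the full occurrence tuples forces distinctness of the truncated ones — which is true here since the nonoverlapping requirement in Definition~\ref{definition4} holds for \emph{every} coordinate $j$, so in particular for the last two — together with the fact that the gap constraint on the retained adjacent pair is inherited unchanged. Everything else is a routine induction on pattern length built on the already-cited anti-monotonicity of nonoverlapping SPM.
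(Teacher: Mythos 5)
Your proposal is correct and follows essentially the same route as the paper: both rest on the anti-monotonicity of nonoverlapping support, concluding that an extension $\mathbf{q}\cdot y$ whose pruned subpattern ($y$ for FET, $xy$ for BET) is infrequent must itself be infrequent, while every extension built from frequent pieces is retained. In fact your explicit verification of suffix anti-monotonicity for BET --- that the coordinatewise condition $l_j \neq l_j'$ of Definition~\ref{definition4} survives restriction to the last two positions --- is more careful than the paper's own proof, which disposes of the BET case with a single ``Similarly.''
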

\begin{proof}
As mentioned above, the nonoverlapping SPM satisfies anti-monotonicity \cite {wu2018tcyb}. Thus, the support of a superpattern is no greater than that of its sub-pattern. Therefore, if item $y$ is infrequent, then the pattern $\textbf{q}$·$y$ is infrequent. Hence, the FET strategy is correct and complete. Similarly, we know that the BET strategy is also correct and complete.
\end{proof}

\subsection{Screening strategy} \label{subsection: Screening strategy}

Obviously, it is easy for us to obtain strong co-occurrence sequential rules based on the frequent co-occurrence patterns. Moreover, we can discover the maximal strong co-occurrence rules based on the maximal co-occurrence patterns. The simplest method is to mine all frequent co-occurrence patterns, and then discover the maximal co-occurrence patterns among them. This method can be seen as a brute-force method. To improve the performance, we propose a screening strategy to discover the maximal co-occurrence patterns. The principle is as follows.

\textbf{Screening strategy:} Suppose we have a pattern \textbf{q}. If one of its superpatterns \textbf{r} is a frequent co-occurrence pattern, then pattern \textbf{q} is not a maximal co-occurrence pattern; otherwise, \textbf{q} is a maximal co-occurrence pattern. 

\begin{theorem}
Screening strategy is correct and complete.
\end{theorem}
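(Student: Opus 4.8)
The plan is to establish correctness and completeness of the screening strategy by appealing to the anti-monotonicity property of nonoverlapping SPM, which has already been invoked in the proofs of Theorems~\ref{theorem:filter}, \ref{theorem:minimum}, and \ref{theorem:fetbet}. By \emph{correct}, I mean that every pattern the screening strategy declares maximal is indeed a maximal co-occurrence pattern according to the definition; by \emph{complete}, I mean that every maximal co-occurrence pattern is in fact retained (declared maximal) by the strategy. The screening strategy is essentially a verbatim restatement of the definition of a maximal co-occurrence pattern, so the real content is to verify that checking a single level of superpatterns (the immediate length-$(|\textbf{q}|+1)$ extensions) suffices to conclude about \emph{all} superpatterns.

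First I would argue correctness: suppose the strategy declares \textbf{q} maximal, i.e., no frequent co-occurrence pattern \textbf{r} among the checked superpatterns exists. I need to rule out that some longer superpattern \textbf{w} of \textbf{q} is a frequent co-occurrence pattern. Any such \textbf{w} has a length-$(|\textbf{q}|+1)$ prefix \textbf{r} that is itself a superpattern of \textbf{q}; by anti-monotonicity, $sup(\textbf{w}, D) \le sup(\textbf{r}, D)$, so if \textbf{w} were frequent, \textbf{r} would be frequent too, contradicting the assumption that no immediate extension is a frequent co-occurrence pattern. Hence no superpattern of \textbf{q} is a frequent co-occurrence pattern, so \textbf{q} is genuinely maximal. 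Second, for completeness: if \textbf{q} is a maximal co-occurrence pattern by definition, then by definition none of its superpatterns---in particular none of its immediate extensions---is a frequent co-occurrence pattern, so the screening strategy does not discard it. Therefore the set of patterns retained by the strategy coincides exactly with the set of maximal co-occurrence patterns.

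I should also make explicit the one point that connects "frequent co-occurrence pattern" back to "strong co-occurrence rule / MCoR," since the screening strategy is stated in terms of co-occurrence patterns but its purpose is MCoR mining: by the minimum support strategy (Theorem~\ref{theorem:minimum}), a co-occurrence pattern $\textbf{r}$ gives a strong co-occurrence rule $\textbf{p} \to \textbf{r}$ exactly when $sup(\textbf{p}\cdot\textbf{r}, D) \ge minsup = sup(\textbf{p}, D) \times mincf$, i.e., exactly when $\textbf{r}$ is a frequent co-occurrence pattern relative to this automatically derived $minsup$; consequently maximal co-occurrence patterns correspond precisely to MCoRs, and screening correctly yields the MCoRs.

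The main obstacle---really the only subtlety---is making the induction on superpattern length airtight: I must be careful that "superpattern of \textbf{q}" is closed under taking prefixes down to length $|\textbf{q}|+1$ (immediate from Definition~\ref{definition7}, since a prefix of a superpattern that still extends \textbf{q} is again a superpattern of \textbf{q}), and that anti-monotonicity applies along this prefix chain. Everything else is a direct unfolding of the definitions, so the proof will be short, parallel in structure to the earlier correctness-and-completeness proofs in this section.
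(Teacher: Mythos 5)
Your proof is correct, but it does noticeably more work than the paper's own argument, and it is worth seeing why. The paper's proof is a one-liner: the screening strategy as \emph{stated} ("if one of its superpatterns \textbf{r} is a frequent co-occurrence pattern, then \textbf{q} is not maximal; otherwise it is") is a verbatim restatement of the definition of a maximal co-occurrence pattern, so the paper simply observes that the strategy coincides with the definition and declares it correct and complete. You instead read the strategy as it is actually \emph{implemented} in MCoR-Miner --- only the immediate length-$(|\textbf{q}|{+}1)$ extensions on the stack are ever tested --- and you supply the anti-monotonicity argument needed to justify that this one-level check suffices: any longer frequent superpattern would force its length-$(|\textbf{q}|{+}1)$ prefix (itself a superpattern of \textbf{q}, since super/subpattern in Definition~\ref{definition7} is the prefix relation) to be frequent as well, a contradiction. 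That argument is sound and is arguably the real mathematical content here; the paper's proof leaves it implicit and relies on the strategy being phrased over \emph{all} superpatterns. Your extra remark tying "frequent co-occurrence pattern" back to "strong rule / MCoR" via the minimum support strategy is also correct and matches how the paper uses Theorem~\ref{theorem:minimum}. In short: both proofs are valid, the paper's buys brevity by proving the literal statement of the strategy, and yours buys rigor by proving the claim for the algorithm that the strategy actually describes.
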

\begin{proof}
According to the definition of the maximal co-occurrence pattern, for a pattern \textbf{q}, if one of its superpattern \textbf{r} is a frequent co-occurrence pattern, then pattern \textbf{q} is not a maximal co-occurrence pattern; otherwise, pattern \textbf{q} is a maximal co-occurrence pattern. The screening strategy discovers the maximal co-occurrence patterns based on the definition of the maximal co-occurrence pattern. Hence, the screening strategy is correct and complete.
\end{proof}

We store all frequent co-occurrence patterns in a stack. Suppose pattern \textbf{q} is the top element of the stack. We generate the candidate patterns based on pattern \textbf{q} using the BET strategy, and then we use the DBI algorithm to calculate the supports of the candidate patterns. If candidate pattern \textbf{r} is a frequent co-occurrence pattern, then we store pattern \textbf{r} in the stack; otherwise, pattern \textbf{q} is a maximal co-occurrence pattern and is stored in the set $M_p$. We check all patterns in the stack until it is empty. Example \ref{example10} illustrates the principle of the screening strategy. 

\begin{example}\label{example10}
    Suppose we have a sequence \textbf{s} = $s_1s_2s_3s_4$ $s_5s_6s_7s_8s_9s_{10}$$s_{11}s_{12}$$s_{13}s_{14}$$s_{15}s_{16}$ = $adbdadcdccabadcd$, \textbf{p} = $a[0,$ $3]d$, and $minsup$ = 3. Suppose the top element of the stack is pattern \textbf{q} = $a[0,3]d[0,3]c$. According to the BET strategy, there are two candidate patterns based on pattern \textbf{q}: $a[0,3]d[0,3]c[0,3]a$ and $a[0,3]d[0,3]c[0,3]d$. We know that the support of pattern $a[0,3]d[0,3]c[0,3]a$ is two. This means that $a[0,3]d[0,3]c[0,3]a$ cannot be stored in the stack, since it is not a frequent co-occurrence pattern. Moreover, we know that the support of pattern $a[0,3]d[0,3]c[0,3]d$ is three. The pattern $a[0,3]d[0,3]c[0,3]d$ is therefore stored in the stack, since it is a frequent co-occurrence pattern. Meanwhile, pattern \textbf{q} is not a maximal co-occurrence pattern, since it has a frequent co-occurrence superpattern $a[0,3]d[0,3]c[0,3]d$. Now, the top element of the stack is pattern $a[0,3]d[0,3]c[0,3]d$ which has two candidate patterns according to the BET strategy: $a[0,3]d[0,3]c[0,3]d[0,3]c$ and $a[0,3]d[0,3]c[0,3]d[0,3]d$. It is then easy to see that the supports of patterns $a[0,3]d[0,3]c[0,3]d[0,3]c$ and $a[0,3]d[0,3]c[0,3]d[0,3]d$ are both one. Hence, the pattern $a[0,3]d[0,3]c[0,3]d$ is a maximal co-occurrence pattern, since all its superpatterns are not frequent co-occurrence patterns. The advantage of the screening strategy is that the maximal co-occurrence patterns can be discovered without a brute-force search process.
\end{example}

\subsection{MCoR-Miner}

We introduce the MCoR-Miner algorithm whose pseudocode is shown in Algorithm 3.  MCoR-Miner has the following main steps.

\begin{enumerate}[Step 1:]
	\item Use SDB-Filt to calculate the support of pattern \textbf{p}, i.e., $sup(\textbf{p}, D)$; prune the sequence \textbf{s} whose support is zero; and calculate $minsup$ = $sup(\textbf{p}, D)$ $\times$ $mincf$  (Lines 1 to 2).

	\item  Traverse $D1$ to find all frequent items and store them in $F_1$ (Line 3).

	\item  Generate candidate pattern set $C_2$ using the pattern join strategy  (Line 4).

	\item  Calculate the support of each pattern in set $C_2$ and store the frequent patterns in $F_2$ (Line 5).

	\item  Store pattern \textbf{p} in stack $T$ (Line 6).

    \item Obtain the top element pattern \textbf{q} of stack $T$ (Line 8).

    \item Generate all candidate co-occurrence patterns of pattern \textbf{q} using frequent items $F_1$ and store them in set $C$ (Lines 10 to 13).

    \item Use BET to prune candidate patterns (Line 14).

    \item Use the DBI algorithm to calculate the support of each pattern \textbf{r} in $C$ (Line 15).

    \item If pattern \textbf{r} is a frequent co-occurrence pattern, then pattern \textbf{r} is stored in stack $T$ (Lines 16 to 19).

    \item If all patterns \textbf{r}  are not frequent co-occurrence patterns, then pattern \textbf{q} is a maximal co-occurrence pattern and is stored in set $M_p$ (Lines 22 to 24).

    \item Iterate Steps 6 to 11, until the stack is empty.

    \item Generate maximal strong rules based on $M_p$.
\end{enumerate} 	


\begin{algorithm}[htb]
	\label{Algorithm 3}
 \small
	\caption{ MCoR-Miner}
	\begin{algorithmic}[1]
\REQUIRE  Sequence database $D$, minimum confidence $mincf$, $gap$ = $[a,b]$, prefix pattern \textbf{p} 
		\ENSURE Maximal co-occurrence sequential rule set $M_r$\\
\STATE  Scan $D$ to obtain index arrays $Arr$; \\
//Use  the filtering strategy to shrink dataset, and the minimum support strategy to calculate $minsup$ 
\STATE $minsup$,$Arr1$  $\gets$ SDB-Filt($Arr$,\textbf{p},$gap$, $mincf$); \\
\STATE Scan $Arr1$ to find and store frequent items in $F_1$;
\STATE Generate candidate patterns with length two and store them in $C_2$; 
\STATE Discover frequent patterns with length two and store them in $F_2$;
        
\STATE Store pattern \textbf{p} in stack $T$;
          \WHILE{not empty $T$}
      \STATE { \textbf{q} $\gets$ $T$.pop(); }
     \STATE  {flag $\gets$ True;}\\
//Use the FET strategy to prune candidate patterns
  \FOR {each item $y$ in $F_1$} 
    \STATE $C$ $\gets$ $C$ $\cup$ \textbf{q}·$y$; 
    \STATE {$x$ $\gets$ the last item of $\textbf{q}$;} 
    \STATE { \textbf{r} $\gets$ \textbf{q}·$y$;}\\
  //Use the BET strategy to prune candidate patterns
\IF {pattern $x$·$y$ in $F_2$}
  \STATE $support$ $\gets$ DBI($Arr1$,\textbf{r},  $gap$); 
		 \IF {$support \geq minsup$}
        \STATE $T$.push (\textbf{r});
        \STATE flag $\gets$ False;
            \ENDIF
	    \ENDIF
       \ENDFOR \\
 //Use the screening strategy to mine the patterns
\IF{ flag=True}   
    \STATE $M_p$ $\gets$ $M_p$ $\cup$ \textbf{q};
    \ENDIF
  \ENDWHILE
   \STATE $M_r$ $\gets$ Generate maximal strong rules based on $M_p$;
  \RETURN $M_r$;
	\end{algorithmic}
\end{algorithm}

\begin{theorem}\label {theorem:timecomplexityDBI}
The space and time complexities of MCoR-Miner are $O(N)$ and $O(t \times M \times N/h)$, where $t$, $M$, $N$, and $h$ are the number of candidate patterns, the maximal length of candidate patterns, the total length of shrunk sequence database, and the size of  $\sum$, respectively.
\end{theorem}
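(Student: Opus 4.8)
The plan is to bound Algorithm~3 phase by phase, reusing the established $O(m\times n/h)$ time and space bound for DBI (one sequence of length $n$, pattern of length $m$, alphabet size $h$) together with the fact that the FET and BET pruning tests are cheap, and then to take the worst case over all phases. The phases are: the preparation stage (SDB-Filt), the generation and testing of length-two candidates, the main depth-first \texttt{while} loop driven by the stack $T$, and the final rule generation from $M_p$.

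For the time bound I would argue as follows. SDB-Filt calls DBI once on each surviving sequence $\textbf{s}_i$ of length $n_i$ at cost $O(|\textbf{p}|\times n_i/h)$; summing over the shrunk database gives $O(|\textbf{p}|\times N/h)\subseteq O(M\times N/h)$, and the remaining bookkeeping in SDB-Filt (accumulating $sup(\textbf{p},D)$, computing $minsup$) is $O(N)$. Scanning $Arr1$ for $F_1$ is $O(N)$, and generating $C_2$ by the pattern-join strategy costs $O(|F_1|^2)\subseteq O(t)$; every length-two candidate is itself one of the $t$ candidate patterns, so the DBI calls that test them are already counted in the bound below. In the main loop, each popped pattern \textbf{q} spawns at most $|F_1|\le h$ extensions $\textbf{q}\cdot y$, the BET membership test $x\cdot y\in F_2$ is $O(1)$, and for each extension that passes, DBI is run on the shrunk database at cost $O(|\textbf{r}|\times N/h)\subseteq O(M\times N/h)$. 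Since DBI is ever invoked only on candidate patterns and there are $t$ of them in total, the loop — and hence the whole algorithm — costs $O(t\times M\times N/h)$; generating the maximal rules from $M_p$ is linear in the output and does not change this.

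For the space bound, the dominant structure is the index-array collection $Arr$ (and its shrunk form $Arr1$), of total size $O(N)$ because the index arrays of a length-$n_i$ sequence together hold exactly $n_i$ positions. A single DBI invocation additionally needs only its occurrence array \textbf{occ} of size $O(M)$, and this scratch space is reused across calls rather than accumulated. The sets $F_1$, $F_2$, the per-node extension set $C$ (reset for each popped \textbf{q}), the stack $T$, and the outputs $M_p$, $M_r$ hold patterns of length at most $M$; along a root-to-leaf path of the enumeration tree the stack holds $O(M)$ patterns with $O(h)$ siblings pending, so this is dominated by $O(N)$. Hence the space complexity is $O(N)$.

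The step I expect to be the real obstacle is the space bound, not the time bound: the time argument is essentially a sum of $t$ independent DBI calls, whereas for $O(N)$ space one must argue carefully that (i) DBI's working memory is genuinely reused and not summed over the $t$ calls, and (ii) the stack $T$ together with the candidate sets never simultaneously stores more than $O(N)$ worth of pattern data — i.e., the delicate point is controlling the breadth of the depth-first enumeration, not counting nonoverlapping occurrences.
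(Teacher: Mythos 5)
Your proposal is correct and follows essentially the same route as the paper's proof: both decompose Algorithm~3 into the preparation stage, the length-two candidate phase, and the main loop, charge each of the $t$ candidate patterns one DBI call at cost $O(M \times N/h)$ using Theorem~1, and conclude that the $O(N)$ index arrays dominate the space. The only differences are bookkeeping ones --- the paper counts the $O(h^2)$ length-two candidates as a separate $O(h^2 \times 2 \times N/h)$ term that it then absorbs into $O(t \times M \times N/h)$, and it is less careful than you about the stack, simply asserting that the $O(t\times M)$ storage of candidate patterns is much smaller than $N$ rather than arguing reuse along the depth-first enumeration.
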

\begin{proof}
The time and space complexities of scanning the sequence database $D$ to obtain its index arrays are $O(N)$. It is easy to know that the space complexity of SDB-Filt is $O(N)$. The time complexity of SDB-Filt is $O(m \times N/h)$, since the time complexity of DBI is $O(m \times n/h)$ and SDB-Filt checks all sequences, where $N=\sum _{i=1}^k n_i$. The space complexity of $F_1$ is $O(h)$, since the size of all frequent items is no greater than all items. Thus, the time complexity of Line 3 is $O(h)$. The time and space complexities of Line 4 are $O(h^2)$. The time complexity of Line 5 is $O(h^2 \times 2 \times N/h)$, since there are $O(h^2)$ candidate patterns, and the length of each candidate pattern is two. The space complexity of storing frequent patterns with length two is $O(h^2 \times 2)$. Suppose MCoR-Miner checks $t$ candidate patterns and the maximal length of candidate patterns is $M$. The space complexity of storing these candidate patterns is $O(t\times M)$, and the time complexity of calculating the supports of these candidate patterns is $O(t\times M\times N/h)$. Hence, the space complexity of MCoR-Miner is $O(N+N+h+h^2+k\times M)$ =$O(N)$, since generally $h$, $h^2$, $k$, $M$, and $k\times M$ are much smaller than $N$. The time complexity of MCoR-Miner is $O(N+ m \times N/h+h^2+ h^2\times 2\times N/h + t\times M\times N/h)$=$O(t\times M\times N/h)$, since $m$ and $2\times h^2$ are much smaller than $M$ and $t$, respectively.
\end {proof}



\section{Experimental results and analysis}
\label{section:Experimental results and analysis}

To validate the performance of MCoR-Miner, we consider the following eight research questions (RQs).

\begin{enumerate}[RQ1:]
\item Does MCoR-Miner yield better performance than other state-of-the-art algorithms in terms of mining co-occurrence patterns? 
\item Does DBI give better performance than other state-of-the-art algorithms in terms of support calculation? 
\item Can the filtering strategy reduce the number of support calculations in MCoR-Miner? 
\item Can the FET and BET strategies reduce the number of candidate patterns, and when can BET achieve better performance? 
\item What is the effect of the screening strategy? 
\item Does MCoR-Miner have better scalability? 
\item How do different parameters affect the performance of MCoR-Miner?
\item What is the performance of MCoR-Miner for recommendation tasks?
\end{enumerate}

To answer RQ1, we select NOSEP, DFOM-All, and DBI-All to verify the performance of MCoR-Miner on the problem of co-occurrence pattern mining. In response to RQ2, we propose NETGAP-MCoR, Netback-MCoR, and DFOM-MCoR to explore the effect of DBI, as described in Section \ref{subsection:Efficiency}. To address RQ3, we propose MCoR-NoFilt to investigate the effect of the filtering strategy, as presented in Section \ref{subsection:Efficiency}. To answer RQ4, we apply two algorithms, MCoR-AET and MCoR-FET, to verify the effects of the FET and BET strategies, as described in Section \ref{subsection:Efficiency}. To solve RQ5, we propose MCoR-NoScr to validate the effect of the screening strategy in Section \ref{subsection:Efficiency}. To answer RQ6, we explore the scalability of MCoR-Miner, as presented in Section \ref{subsection:scalability}. To address RQ7, we explore the effects of different gap constraints and minimum confidence thresholds on the running time of MCoR-Miner in Section \ref{subsection:Influence of parameters}. To answer RQ8, we report the performances of MCoR-Miner in terms of confidence, recall, and F1-score in section \ref{subsection:Comparison of confidence}. All experiments were conducted on a computer with an Intel(R) Core(TM)i7-10870H processor, 32 GB of memory, the Windows 10 operating system, and VS2022 as the experimental environment. All algorithms and datasets can be downloaded from https://github.com/wuc567/Pattern-Mining/tree/master/MCoR-Miner. 

\subsection{Benchmark datasets and baseline algorithms}

We used eight datasets shown in Table \ref{Description of datasets} as test sequences. 

\begin{table} [h]
\centering
    \scriptsize
    \caption{Description of datasets}
    \begin{tabular}{lccccc}
        \toprule
        \textbf{Dataset}  &   Type  & \textbf{$\sum$}    & \textbf{Number of sequence}          & \textbf{Length}   \\
        \midrule
        {GAMESALE\textsuperscript{1}}  & Commerce  &12  & 39  &16,446  \\
        {BABYSALE\textsuperscript{2}}  & Commerce  &6  & 	949  &29,971  \\
        {TRANSACTION\textsuperscript{3}}  &Commerce  &31  & 4,661  &32,851  \\		
        {MOVIE\textsuperscript{4}}  & Commerce  & 20  & 4,988  &741,132 \\		
        {MSNBC\textsuperscript{5}}  & Clickstream  &17  & 200,000  &948,674  \\		
        {SARS\textsuperscript{6}}  &  Bio-sequence  &4 & 426  &29,751  \\		
        {SARS-Cov-2\textsuperscript{7}}  & Bio-sequence  & 4  & 428  &29,903  \\		
        {HIV\textsuperscript{8}}  & Bio-sequence  & 20  & 1  &10,000  
        \\		   
        \bottomrule
    \end{tabular}
	\label{Description of datasets}
    
	\flushleft
 \scriptsize
	\begin{enumerate}[1:] 
		\item GAMESALE is a dataset of video game sales with ratings, which can be downloaded from http://www.kaggle.com/rush4ratio/video-game-sales-with-ratings.
		\item BABYSALE is a sales dataset for infant products, which can be downloaded from https://tianchi.aliyun.com/dataset/dataDetail?dataId=45.
		\item TRANSACTION is a dataset of sales behavior, which can be downloaded from https://www.kaggle.com/retailrocket/ecommerce-dataset.
		\item MOVIE contains users' movie ratings, which can be downloaded from http://grouplens.org/datasets/.
		\item MSNBC is a clickstream sequence database, which can be downloaded from http://www.philippe-fournier-viger.com/spmf/index.php?link=datasets.php.
		\item SARS contains the gene sequences of the virus that caused severe acute respiratory syndrome in 2003, and can be downloaded from https://www.ncbi.nlm.nih.gov/nuccore/30271926.
  	\item SARS-COV-2 contains the gene sequences of the virus causing COVID-19, and can be downloaded from https://www.ncbi.nlm.nih.gov/nuccore/MN908947.
   	\item HIV can be downloaded from http://archive.ics.uci.edu/ml/machine-learning-databases/00330/.
	\end{enumerate}
\end{table}

To report the performance of the MCoR-Miner algorithm, four experiments were designed, and nine competitive algorithms were selected. 

\begin{enumerate} [1.]

\item NOSEP \cite {wu2018tcyb}, DFOM-All  \cite {dfom2021kbs}, and DBI-All: To verify that it is necessary to use a special algorithm to discover frequent co-occurrence patterns, we selected two state-of-the-art algorithms: NOSEP \cite {wu2018tcyb} and DFOM \cite {dfom2021kbs}, where NOSEP can mine all frequent patterns. DFOM-All and DBI-All can also discover all frequent patterns, which employed the DFOM and DBI algorithms to calculate the support, respectively. We can then find the frequent co-occurrence patterns among all frequent patterns. Based on the above analysis, we know that the time complexities of NOSEP, DFOM-All, and DBI-All are $O(t_a \times M\times M\times N\times w)$, $O(t_a \times M\times N)$, and  $O(t_a \times M\times N/h)$, respectively, where  $t_a$ ($t_a \geq t$) and $w$ are the number of all frequent patterns and $b-a+1$.


\item  NETGAP-MCoR \cite {wu2018tcyb}, Netback-MCoR \cite{li2022apinm}, and DFOM-MCoR \cite {dfom2021kbs}:  To explore the running performance of DBI, three competitive algorithms were proposed: NETGAP-MCoR, Netback-MCoR, and DFOM-MCoR. The three competitive algorithms employed three state-of-the-art algorithms NETGAP \cite {wu2018tcyb}, Netback \cite{li2022apinm}, and DFOM  \cite {dfom2021kbs} to calculate the support, respectively. Other strategies, such as filtering, FET, BET, and screening, are the same as  MCoR-Miner. Based on the above analysis, we know that the time complexities of NETGAP-MCoR, Netback-MCoR, and DFOM-MCoR are $O(t \times M\times M\times N\times w)$, $O(t \times M\times N\times w)$, and $O(t \times M\times N)$, respectively.

\item  MCoR-NoFilt: To demonstrate the performance of the filtering strategy, MCoR-NoFilt was proposed, which did not employ the filtering strategy but is otherwise the same as MCoR-Miner.  The time complexity of MCoR-NoFilt is $O(t \times M\times N_o/h)$, while that of MCoR-Miner is $O(t \times M\times N/h)$, where $N_o$ and  $N$ ($N_o \geq N$) are the lengths of the original and shrunk sequence databases, respectively.
	
\item MCoR-AET and MCoR-FET: To investigate the performance of the BET strategy, we proposed MCoR-AET and MCoR-FET, which used all items and frequent items to generate candidate patterns, respectively.  The time complexities of MCoR-AET and MCoR-FET are $O(t_a \times M\times N/h)$ and $O(t_f \times M\times N/h)$, respectively, where $t_a$ ($t_a \geq t$) and $t_f$ are the numbers of candidate patterns. Note that $t_f$ can be smaller than $t$, since MCoR-Miner needs to calculate $O(h^2)$ candidate patterns with length two, and in some cases, it cannot prune $O(h^2)$ candidate patterns using the BET strategy.

\item MCoR-NoScr: To assess the performance of the screening strategy, we used MCoR-NoScr, which did not include the screening strategy. The time complexity of MCoR-NoScr is almost the same as that of MCoR-Miner.

\item CoP-Miner: To verify the recommendation performance,  CoP-Miner was proposed, which can mine all frequent patterns with the predefined prefix pattern, and employed the same strategies as MCoR-Miner. The time complexity of CoP-Miner is $O(t_c \times M\times N/h)$, where $t_c$ ($t_c \geq t$) is the number of all frequent patterns.
\end{enumerate} 

\subsection{Efficiency}
\label{subsection:Efficiency}

To determine the necessity of co-occurrence pattern mining, we selected NOSEP, DFOM-All, and DBI-All as the competitive algorithms. Moreover, to verify the performance of the DBI algorithm and the filtering, BET, and screening strategies, we employed NETGAP-MCoR, Netback-MCoR, DFOM-MCoR, MCoR-NoFilt, MCoR-AET, MCoR-FET, and MCoR-NoScr as the competitive algorithms. Eight datasets were selected. Since the characteristics of these datasets are significantly different, for example in terms of different number of characters and sequences, we set different parameters as shown in Table \ref{Parameters}. Comparisons of the running time and memory usage are shown in Tables \ref{Comparison of running time}  and \ref{Comparison of memory usage}, respectively. The main indicators of mining results, such as the number of CoRs and  MCoRs are shown in Table \ref{tab7:Main indicators of mining results}.


\begin{table}
\centering
     \scriptsize
    \caption{Parameters}
       \begin{tabular}{lccccc}
        \toprule
        {Dataset}     & {Prefix}    & {Gap constraint}          & {Minimum confidence}   \\
        \midrule
        {GAMESALE}  & $d$  & [0,8]  & 0.3  \\
        {BABYSALE}  & $a$  & 	[0,8]  &0.2  \\
        {TRANSACTION}   & $z$  & [0,3]  &0.3  \\		
        {MOVIE}   & $d$  & [0,3]  &0.5 \\		
        {MSNBC}    & $e$  & [0,8]  &0.3  \\		
        {SARS}    & $C$ & [0,3]  &0.6  \\		
        {SARS-Cov-2}  & $C$  & [0,3]  &0.6  \\		
        {HIV}  & $C$  & [0,3]  &0.3  \\		   
        \bottomrule
    \end{tabular}
	\label{Parameters}
\end{table}

\begin{table*}
	\centering
 \scriptsize
\caption{Comparison of running time (s)}
 	\label{Comparison of running time}
\begin{tabular}{ccccccccc}
		\toprule   
		&GAMESALE&BABYSALE&TRANSACTION&MOVIE&MSNBC&SARS&SARS-Cov-2& HIV\\
		\midrule 
		NOSEP &	47.95 & 152.55  & 10.20  & 77.66  & -
 &95.47  &133.12  &211.98 \\
		DFOM-All &	2.47 & 12.33  & 2.11  & 5.78  & - & 9.07 
& 10.00 & 84.28 \\
  DBI-All&	0.92 	&8.38 	&0.45 &4.13 &15.72 	&6.56 
	&8.28 	&1.48 \\
  NetGAP-MCoR&	16.17 	&113.00	&2.80&50.16 	&587.91
	&2.80 &2.75 &7.74\\
  Netback-MCoR&15.89&	112.17&	2.70 &	46.53&	582.67 &	2.58&	2.60 &	7.78 \\
  DFOM-MCoR&	1.88 &	11.34 &	0.70 &	5.25 &	97.38 &	3.60 &	3.13 &	5.11\\
  MCoR-NoFilt&	0.66 &	5.53 &	1.00 &	4.08 &	215.83 &	2.81 &	2.97 &	0.13 \\
  MCoR-AET&	0.70 &	7.14 &	3.88 &	16.22 &	25.92 &	1.56 &	1.42 &	\textbf{0.02} \\
  MCoR-FET&	0.59 &	5.94 &	0.47 &	5.25 &	24.17 &	\textbf{1.55} &	\textbf{1.41} &	\textbf{0.02} \\
 MCoR-NoScr&0.63 	&5.80 &	0.48 &	4.11 &	16.52 &	3.13 &	2.97 &	0.13 \\
  MCoR-Miner& \textbf{0.42} 	&\textbf{5.36} &	\textbf{0.38} &	\textbf{4.00} &	\textbf{16.08} &	2.34 &	2.66 &	0.11 \\
				\bottomrule
	\end{tabular}
\end{table*}

\begin{table*}
\centering
 \scriptsize

	\caption{Comparison of memory usage (Mb)}
 	\label{Comparison of memory usage}
\begin{tabular}{ccccccccc}
		\toprule   
		&GAMESALE&BABYSALE&TRANSACTION&MOVIE&MSNBC&SARS&SARS-Cov-2& HIV\\
		\midrule 
		NOSEP &	32.1  & 203.3   & 908.1  &1092.5   & 1183.2  &106.2   &106.5   &19.4  \\
		DFOM-All &	22.0  & 195.7   & 903.7   & 971.2  & 1188.0 & 95.1  & 95.4  & 20.2 \\
  DBI-All&	14.4  	&16.9  	&21.6  &38.5  &230.7 &17.6  &17.6  	&14.1 \\
  NetGAP-MCoR&	23.7  	&25.8 	&24.8 &163.1 	&404.9 
	&25.2&25.2&19.6\\
  Netback-MCoR&26.7&	18.8 &	35.7
 &	187.2&	263.9 &27.8&	27.4 &	20.3\\
  DFOM-MCoR&	15.9  &	18.7  &	22.8  &	40.0 
 &	226.0  &	15.8  &15.9 &	15.9 \\
  MCoR-NoFilt&	16.7  &	19.6 
&	25.0  &	41.1  &445.8  &	17.6  &	24.3 
 &	23.0  \\
  MCoR-AET&	16.0  &	18.8 
 &	24.5 &	43.9  &232.4  &	17.2 &	17.2  &16.0 
 \\
  MCoR-FET&	24.0  &26.4  &	30.5  &	47.8  &238.80 
 &	\textbf{14.9}  &	\textbf{14.8}  &	\textbf{13.4}  \\
 MCoR-NoScr&20.8 	&32.1 &30.3 &	47.4  &239.0 
 &	24.3  &16.9  &	15.8  \\
  MCoR-Miner&\textbf{13.6} &\textbf{16.5}  &\textbf{20.8} &	\textbf{38.0}  &\textbf{228.8}  &17.4 
 &	26.6 &	21.5 \\
				\bottomrule
	\end{tabular}
\end{table*}



\begin{table*}
	\centering
 \scriptsize
	\caption{Main indicators of mining results}
 
	\label{tab7:Main indicators of mining results}
	\begin{tabular}{ccccccccc}
		\toprule   
		&GAMESALE&BABYSALE&TRANSACTION&MOVIE&MSNBC&SARS&SARS-Cov-2& HIV\\
		\midrule 
		Number of CoRs &	36	& 98 & 4 & 4 & 12 &2 &2 &3\\
		Number of MCoRs &	20 & 47 & 2 & 1 & 1 & 2& 2& 3\\
  Number of filtered sequences&	642	&1,350	&104,787&	288	&49,163,184	&28	&28	&0\\
  Number of non-filtered sequences&	11,877	&425,700	&21,060	&79,520	&1,236,816	&11,900	&11,956	&472\\
  Number of all items&	12&	6&	30&	20&	17&	4&	4&	20\\
  Number of frequent items&	10&	5&	4&	3&	15&	4&	4&	20\\
  Number of BET items&	27&	15&	4&	2&	1&	13&	13&	286\\
  Number of BET items under non-filtered&	27&	15&	6&	2&	24&	13&	13&	286\\
  Number of candidate patterns
 pruned by BET&	149&	70&	9&	8&	168&	0&	0&	8\\
  Number of candidate patterns 
nonpruned by BET&	221	&425&	11&	7&	27&	12&	12&	72\\
				\bottomrule
	\end{tabular}
\end{table*}

These results give rise to the following observations.
\begin{enumerate} [1.]

\item Maximal co-occurrence rule mining can effectively reduce the number of rules. From Table \ref{tab7:Main indicators of mining results},  on BABYSALE, there are 98 co-occurrence rules, while there are 47 maximal co-occurrence rules. The same phenomenon can be found on all the other datasets. This is because as shown in Example \ref{example5}, both $ad$ $\to$ $c$ and $ad$ $\to$ $cd$ are rules, and $ad$ $\to$ $cd$ is a maximal rule, while rule $ad$ $\to$ $c$ is not a maximal rule. This means that MCoR mining can effectively reduce the number of rules.

\item It is necessary to explore the co-occurrence pattern mining, since MCoR-Miner is significantly faster than NOSEP \cite {wu2018tcyb} and DFOM-All  \cite {dfom2021kbs}, and in particular is faster than DBI-All. For example, on GAMESALE, NOSEP and DFOM-All take 47.95s and 2.47s, respectively, while MCoR-Miner takes 0.42s, meaning that MCoR-Miner is about 100 times faster than NOSEP, and six times faster than DFOM-All. Moreover, DBI-all takes 0.92s, which is slower than MCoR-Miner. These phenomena can be found on the other datasets. The reasons for this are twofold. Firstly, MCoR-Miner employs a more efficient algorithm, called DBI, to calculate the support compared to NOSEP and DFOM-All. As mentioned in Section \ref{subsection:Support calculation}, we know that NOSEP has to create a whole Nettree to calculate the support. However, a Nettree contains many redundant nodes and parent-child relationships. Although DFOM can overcome this disadvantage, it uses a sequential searching strategy to find feasible child nodes of the current node, which is an inefficient method. In contrast, DBI is equipped with an indexing mechanism to avoid sequential searching, meaning that MCoR-Miner is significantly faster than NOSEP and DFOM-All. Secondly, although both DBI-All and MCoR-Miner employ DBI to calculate the support, DBI-All runs slower than MCoR-Miner. The reason for this is that the three algorithms have to discover all of the patterns, and since many of these are not targeted patterns, all three algorithms have to filter out these useless patterns. The experimental results therefore indicate that it is necessary to explore the co-occurrence pattern mining. 

\item  MCoR-Miner runs faster than NETGAP-MCoR, Netback-MCoR, and DFOM-MCoR on all datasets, which indicates the efficiency of DBI. For example, from Table \ref{Comparison of running time}, we can see that on BABYSALE, MCoR-Miner takes 5.36s, while NETGAP-MCoR, Netback-MCoR, and  DFOM-MCoR take 113.00s, 112.17s, and 11.34s, respectively. Thus, MCoR-Miner is obviously faster than NETGAP-MCoR and Netback-MCoR, and is about twice as fast as DFOM-MCoR. The reasons are as follows. The four algorithms adopt different methods to calculate the support. NETGAP-MCoR and Netback-MCoR have to create the whole Nettree which contain many useless nodes and parent-child relationships. Therefore, their time complexities are higher than those of DFOM-MCoR and MCoR-Miner. Although DFOM-MCoR does not create the whole Nettree, DFOM adopts a sequential searching strategy to find feasible child nodes of the current node, which is inefficient. Hence, MCoR-Miner outperforms NETGAP-MCoR, Netback-MCoR, and  DFOM-MCoR.

\item  MCoR-Miner runs faster than MCoR-NoFilt on all datasets, which indicates the effectiveness of the filtering strategy. According to Table \ref{Comparison of running time}, on BABYSALE, MCoR-Miner takes 5.36s, while MCoR-NoFilt takes 5.53s. This phenomenon can also be found on all the other datasets. The reason for this is that some useless sequences are filtered out. For example, from Table \ref{tab7:Main indicators of mining results}, we see that 1350 sequences are filtered out when minimg all rules, which means that it is not necessary to calculate the support for these 1350 sequences, and hence the running performance is improved. The filtering strategy can reduce not only  the number of sequences, but also the number of frequent patterns with length two, which means that it can reduce the number of candidate patterns with length more than two. For example, on the TRANSACTION dataset, there are only four patterns under the filtering strategy, but six under the non-filtering strategy, meaning that more time is needed to discover all the rules. All in all, the filtering strategy is effective, and MCoR-Miner outperforms MCoR-NoFilt.

\item MCoR-Miner runs faster than MCoR-AET and MCoR-FET on GAMESALE, BABYSALE, TRANSACTION, MOVIE, and MSNBC, but slower than MCoR-AET and MCoR-FET on SARS, SARS-Cov-2, and HIV, which indicates that the BET strategy has certain advantages and limitations. We choose two typical datasets to illustrate the reason for this. 1) Taking GAMESALE as an example, we see that there are 12 items and 10 frequent items. Thus, there are 10$\times$ 10=100 candidate patterns with length two. After checking these 100 candidate patterns, we have 27 frequent patterns with length two. Using these 27 patterns, 149 candidate patterns are pruned by the BET strategy, and 221 candidate patterns are not according to Table \ref{tab7:Main indicators of mining results}. Thus, by adding 100 times support calculations for candidate patterns with length two, MCoR-Miner can avoid 149 times support calculations for candidate patterns. Hence, MCoR-Miner can improve the running performance on GAMESALE. 2) Taking HIV as another example, we see that there are 20 frequent items. Thus, there are 20$\times$20=400 candidate patterns with length two. After checking these 400 candidate patterns, we have 286 frequent patterns with length two. Of these 286 patterns, only eight candidate patterns are pruned by the BET strategy, and 72 are not   according to Table \ref{tab7:Main indicators of mining results}. Thus, by adding 400 times support calculations for candidate patterns with length two, MCoR-Miner can avoid only eight times support calculations for candidate patterns. The running time is therefore increased. Thus, the experimental results show that the BET strategy has some limitations on the Bio-sequence datasets. Hence, our methods, MCoR-Miner or MCoR-FET, can obtain better performance in all cases.

\item  MCoR-Miner runs faster than MCoR-NoScr on all datasets, which indicates the effectiveness of the screening strategy. For example, on TRANSACTION, MCoR-Miner requires 0.38s, while MCoR-NoScr requires 0.48s. This phenomenon can be found on all the other datasets. The reason for this is as follows. The difference between the two algorithms is that MCoR-Miner is equipped with the screening strategy, while MCoR-NoScr is not. Thus, the results indicate that the screening strategy is an effective way to reduce the running time, and hence MCoR-Miner outperforms MCoR-NoScr.

\item  From Table \ref{Comparison of memory usage}, we notice that MCoR-Miner consumes less memory when mining patterns. For example, on GAMESALE, MCoR-Miner consumes 13.6Mb memory, which is less than the other competitive algorithms. This result can be found on most of the datasets. In particular, MCoR-Miner consumes less memory than NOSEP and DFOM-All algorithms. This is because in order to calculate the support, NOSEP needs to create a whole Nettree, which contains a lot of useless information. Although DFOM overcomes this drawback, it uses the sequential searching strategy to find feasible child nodes of the current node, which is not efficient. However, MCoR-Miner uses the DBI algorithm, which is equipped with the indexing mechanism to avoid sequential searching. Hence, MCoR-Miner consumes less memory.
\end{enumerate} 

In summary, MCoR-Miner achieves better running performance and memory consumption performance than other competitive algorithms.

\subsection{Scalability}
\label{subsection:scalability}

To validate the scalability of MCoR-Miner, we selected MSNBC as the experimental dataset, since it was the largest dataset in Table \ref{Description of datasets}. Moreover, we  created MSNBC1, MSNBC2, MSNBC3, MSNBC4, MSNBC5, and MSNBC6, which were one, two, three, four, five, and six times the size of MSNBC, respectively. We set $gap$ = $[0,8]$, $mincf$ = 0.3, and the prefix pattern was $e$. Comparisons of running time and memory usage are shown in Figs. \ref{Comparison of running time with different dataset size} and \ref{Comparison of memory usage with different dataset size}, respectively.

\begin{figure}
    \centering
    \includegraphics[width=0.95\linewidth]{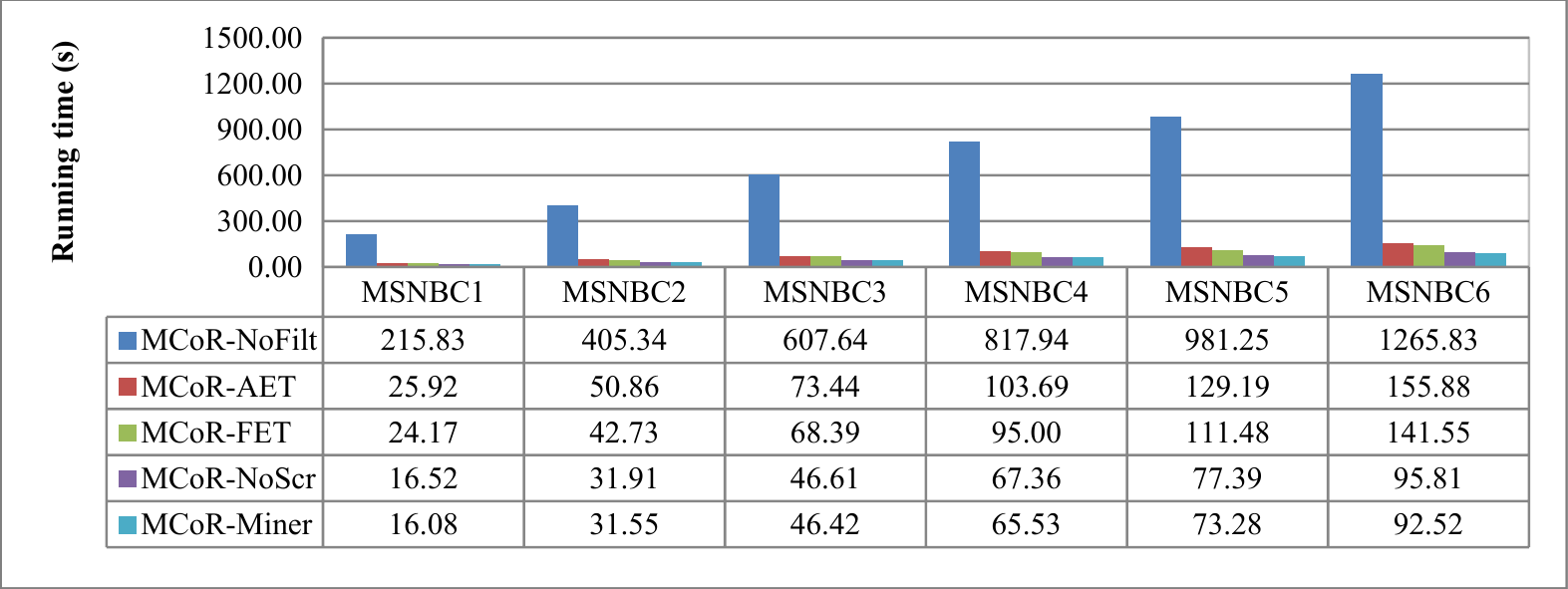}
    \caption{Comparison of running time with different dataset sizes (s) }
    \label{Comparison of running time with different dataset size}
\end{figure}

\begin{figure}
    \centering
    \includegraphics[width=0.95\linewidth]{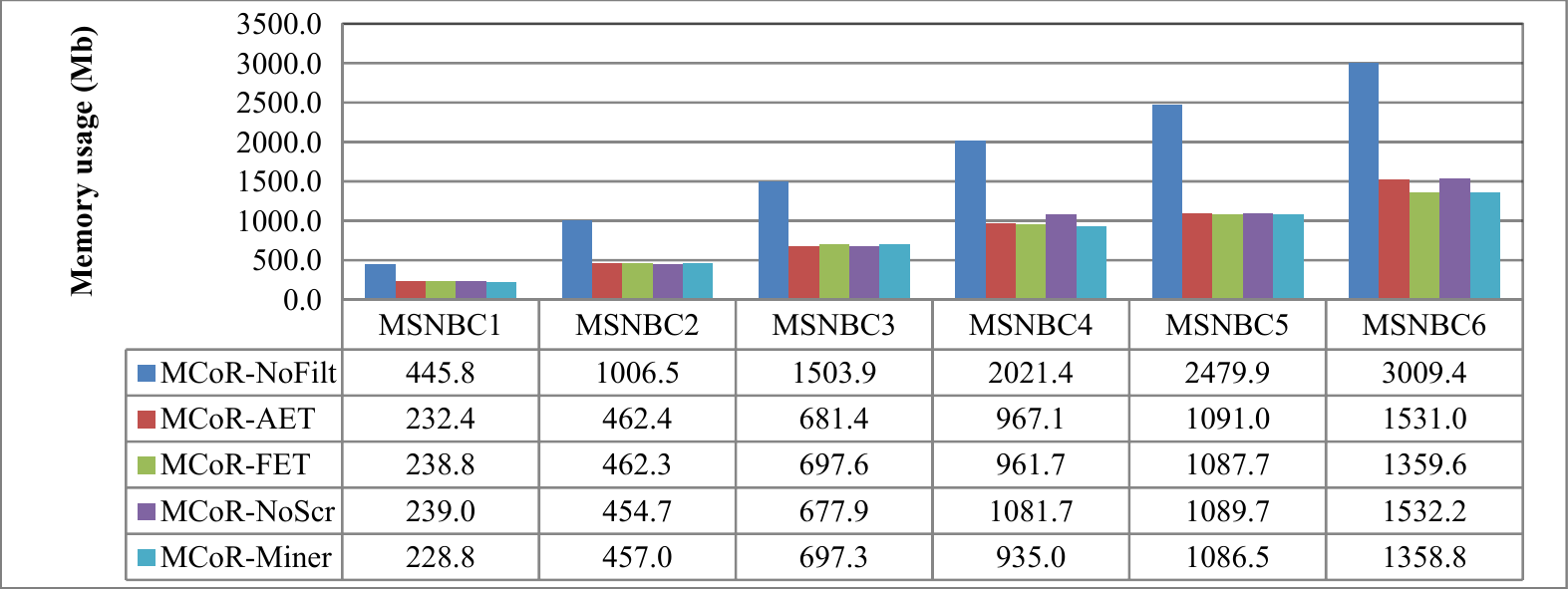}
    \caption{Comparison of memory usage with different dataset sizes (Mb)}
    \label{Comparison of memory usage with different dataset size}
\end{figure}

The results give rise to  the following observations. 

From Figs. \ref{Comparison of running time with different dataset size} and \ref{Comparison of memory usage with different dataset size}, we know that the running time and the memory usage of MCoR-Miner both show linear growth with the increase of the size of the dataset. For example, the size of MSNBC6 is six times that of MSNBC1. MCoR-Miner takes 16.08s on MSNBC1, and 92.52s on MSNBC6, giving 92.52/16.08=5.75. The memory usage of MCoR-Miner is 1358.8Mb on MSNBC6, and 228.8Mb on MSNBC1, giving 1358.8/228.8 = 5.94. Thus, the growth rates of the running time and memory usage are slightly lower than the increase of the dataset size. This phenomenon can be found on all the other datasets. These results indicate that the time and space complexities of MCoR-Miner are positively correlated with the dataset size. More importantly, we can see that MCoR-Miner is more than 10 times faster than MCoR-NoFilt, and that the memory usage of MCoR-NoFilt is more than twice that of MCoR-Miner. We therefore draw the conclusion that MCoR-Miner has strong scalability, since the mining performance does not degrade with the increase of the dataset size.

\subsection{Influence of parameters}
\label{subsection:Influence of parameters}

We assessed the effects of different gap constraints and minimum confidence on the number of rules and running time.

\subsubsection{Influence of different gap constraints} 

To analyze the influence of different gap constraints on the number of rules and running time of MCoR-Miner, we selected GAMESALE as the experimental dataset, and MCoR-NoFilt, MCoR-AET, MCoR-FET, and MCoR-NoScr as the competitive algorithms. The prefix pattern was $d$ and the minimum confidence was 0.3. The gap constraints were [0,5], [0,6], [0,7], [0,8], [0,9], and [0,10]. A comparison of the running time is shown in Fig. \ref{Comparison of running time with different gaps} and the numbers of mined CoRs and MCoRs mined are shown in Table \ref{Comparison of number of CoRs and MCoRs with different gaps}. 

\begin{figure}
    \centering
    \includegraphics[width=0.95\linewidth]{ 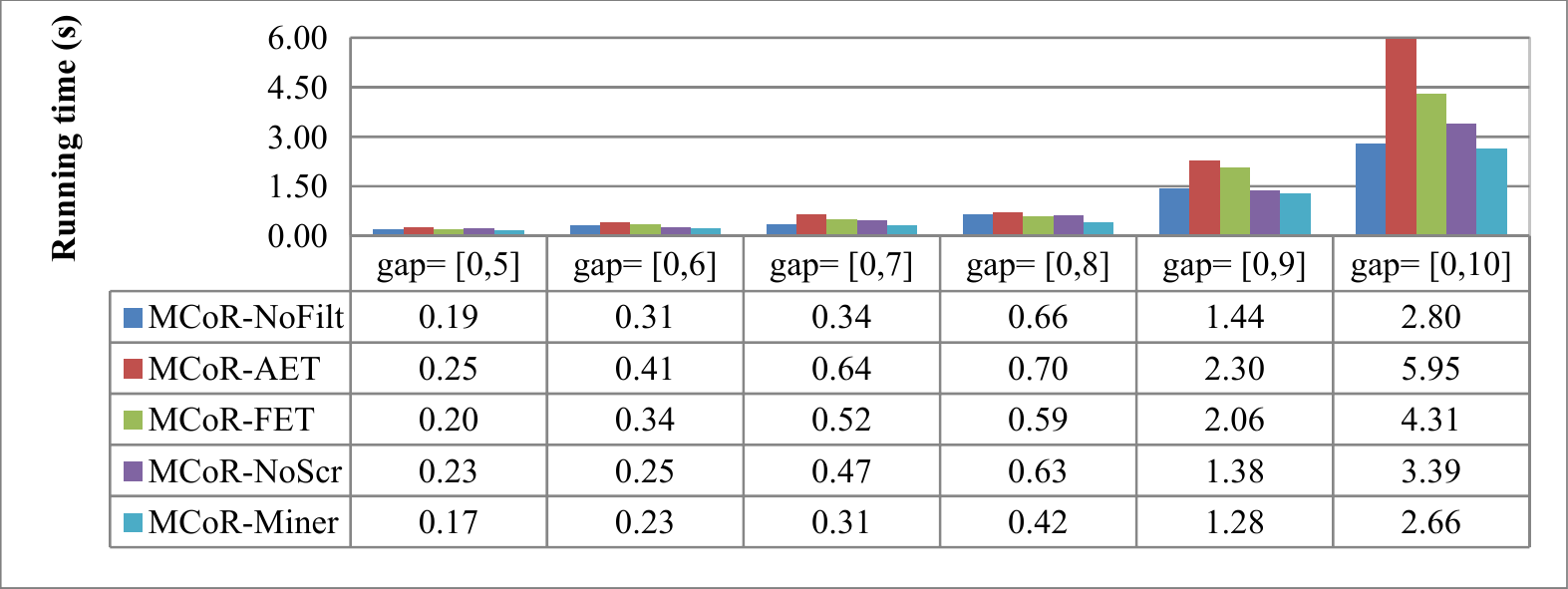}
    \caption{ Comparison of running time with different $gap$ (s)}
    \label{Comparison of running time with different gaps}
\end{figure}

\begin{table}
	\centering
	\caption{ Comparison of numbers of CoRs and MCoRs with different gaps}
	\begin{tabular}{ccccccc}
		\toprule   
		&gap=&gap=&gap=&gap=&gap=&gap=\\
		&[0,5]&[0,6]&[0,7]&[0,8]&[0,9]&[0,10]\\
		\midrule 
		Number of CoRs &	8	& 14 & 20 & 36 & 63 &134\\
		Number of MCoRs &	6 & 10 & 12 & 20 & 34 & 63\\
				\bottomrule
     \label{Comparison of number of CoRs and MCoRs with different gaps}
	\end{tabular}
\end{table}

These results give rise to  the following observations.

With the increase of the gap, the running time and number of CoRs and MCoRs are increased. For example, when gap=[0,8], MCoR-Miner takes 0.17s, and mines eight CoRs and six MCoRs, whereas when gap=[0,8], MCoR-Miner takes 0.42s, and mines 36 CoRs and 20 MCoRs. This phenomenon can be found on all the other algorithms. The reason is as follows. We know that with the increase of the gap, the support of a pattern increases. Thus, more patterns can be found, and each co-occurrence pattern corresponds to a CoR. Hence, more CoRs and MCoRs can be found and the running time also increases. MCoR-Miner outperforms the other competitive algorithms for any gap constraints.

\subsubsection{Influence of different minimum confidence}

To report the influence of different minimum confidences on the number of rules and running time of MCoR-Miner, we selected GAMESALE as the experimental dataset, and MCoR-NoFilt, MCoR-AET, MCoR-FET, and MCoR-NoScr as the competitive algorithms. The prefix pattern was $d$ and the gap constraints was [0,8]. We set the minimum confidence to 0.16, 0.20, 0.24, 0.28, 0.32, and 0.36. A comparison of the running time is shown in Fig. \ref{Comparison of running time with different mincf} and the numbers of CoRs and MCoRs mined are shown in Table \ref{tab:Comparison of number of CoRs and MCoRs with different mincf}.

\begin{figure}
    \centering
    \includegraphics[width=0.95\linewidth]{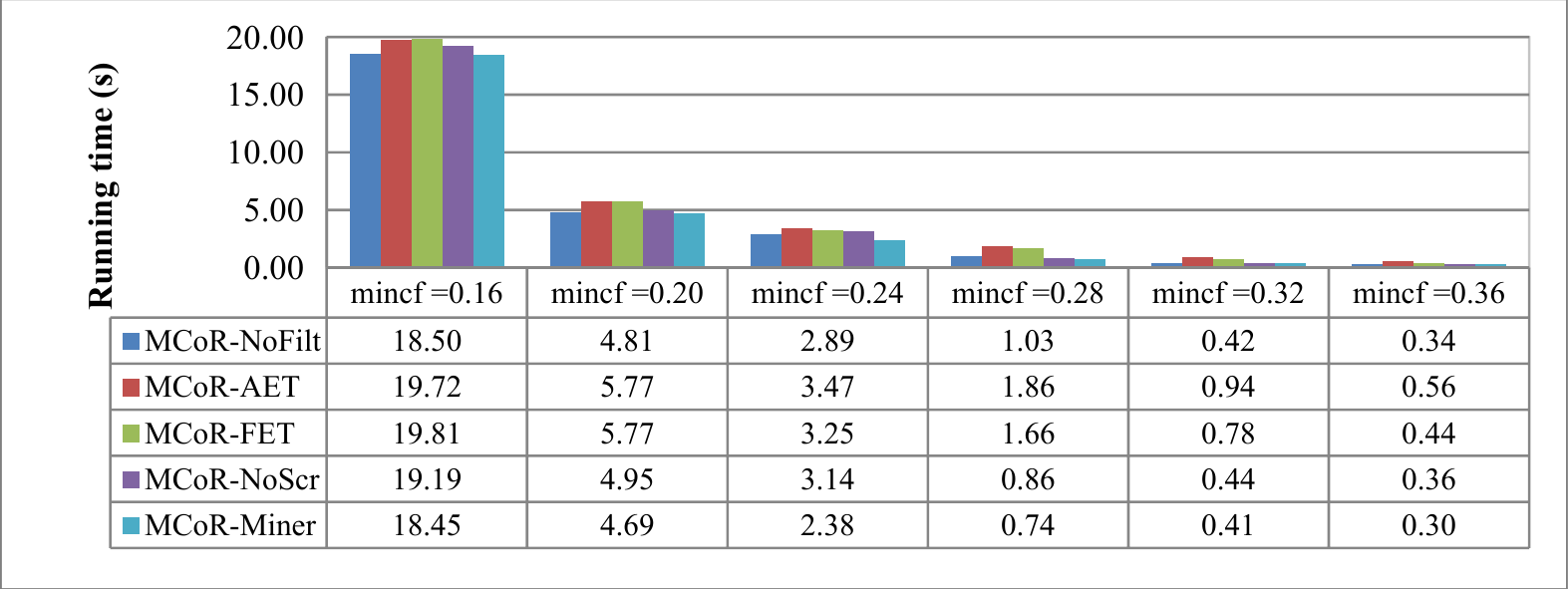}
    \caption{Comparison of running time with different $mincf$ (s)}
    \label{Comparison of running time with different mincf}
\end{figure}
\begin{table}
	\centering
	\caption{Comparison of numbers of CoRs and MCoRs with different $mincf$}
	\label{tab:Comparison of number of CoRs and MCoRs with different mincf}
	\begin{tabular}{ccccccc}
		\toprule   
&\textit {mincf}& \textit {mincf}& \textit {mincf}& \textit {mincf}& \textit {mincf}& \textit {mincf}\\
		&=0.16&=0.20&=0.24&=0.28&=0.32&=0.36\\
		\midrule 
		Number of CoRs &796	& 246& 100 & 49	& 27&17 \\
		Number of MCoRs &431 & 134 & 53 & 25 & 15 & 9\\
		\bottomrule
	\end{tabular}
\end{table}

The results give rise to the following observations.  

With the increase of minimum confidence, the running time and numbers of CoRs and MCoRs are decreased. For example, when $mincf$ = 0.16, MCoR-Miner takes 18.45s, and mines 796 CoRs and 431 MCoRs, whereas when $mincf$ = 0.36, MCoR-Miner takes 0.30s, and mines 17 CoRs and nine MCoRs. This phenomenon can be found on the other algorithms. The reason is as follows. For a certain prefix pattern, the minimum support threshold increases with the increase of $mincf$. Thus, few patterns can be frequent patterns, and each co-occurrence pattern corresponds to a CoR. Hence, fewer CoRs and MCoRs can be found and the running time also decreases. MCoR-Miner outperforms the other competitive algorithms for any minimum confidence.
	
\subsection{MCoR-Miner for recommendation}
\label{subsection:Comparison of confidence}

The recommendation task is to provide the recommendation schemes based on the discovered potential rules with the same given antecedent \textbf{p}. To investigate  the recommendation performance of MCoR-Miner,  we compare the recommendation performance of co-occurrence pattern mining and rule mining. Therefore, we employed CoP-Miner as the competitive algorithm and selected the commerce datasets: GAMESALE, BABYSALE, TRANSACTION, and MOVIE. The first 80\% of the sequences were used as the training set, and the remaining 20\% as the testing set. The parameters used for training are shown in Table \ref{Parameter setting}.

\begin{table}
    \centering
    \caption{Parameter settings}
    \begin{tabular}{ccccc}
        \toprule

\multirow{2}*
{Dataset}  & \multirow{2}*{Prefix } &  {Gap } & $mincf$ for  & $minsup$ for   \\
 &    & {constraint}    & MCoR-Miner          & CoP-Miner   \\
        \midrule
        {GAMESALE} &$d$& [0,3]	& 0.39	  & 400 \\
        {BABYSALE}  & $fc$  & 	[0,3]  &0.40 &300  \\
        {TRANSACTION}  & $xz$  & [0,3]  &0.40 &100  \\		
        {MOVIE}  & $dee$  & [0,3]  &0.40 &150000 \\		
        	   
        \bottomrule
    \end{tabular}
	\label{Parameter setting}
\end{table}

To evaluate the performance of recommendation based on the co-occurrence rules, in addition to confidence, there are three commonly used criteria: precision $Pr$ = $TP/(TP+FP)$, recall $Re$ = $TP/(TP+FN)$, and F1-score $F1$ = $2\times Pr\times Re/(Pr+Re)$, where $TP$ is the number of correct recommendations, $FP$ is the number of incorrect recommendations, and $FN$ is the number of relevant items that are not included in the recommendation list. Taking GAMESALE as an example, we see that there are two maximal co-occurrence rules in the training set for MCoR-Miner: \{$d$ $\to$ $d$, $d$ $\to$ $e$\}, which means that if a user purchases item $d$, then MCoR-Miner will recommend items $d$ and $e$ to the user. In the testing set, all of the patterns of length two  with the prefix pattern $d$ are $da$, $db$, $dd$, $de$, $dj$, and $dk$, with supports 110, 108, 139, 281, 126, and 83, respectively. Therefore, in this example, $TP$ is 420, since the sum of the supports for patterns $dd$ and $de$ is 139 + 281 = 420. FP is zero, since MCoR-Miner recommends items $d$ and $e$, and in the test set, the next-item actually can be $a$, $b$, $d$, $e$, $j$, and $k$, which means that items $d$ and $e$ are both in these items, i.e., MCoR-Miner has no error recommendation. $FN$ is 427, since the sum of the supports for patterns $da$, $db$, $dj$, and $dk$ is 110+ 108 + 126+ 83 = 427.  Thus, the precision, recall, and F1-score of MCoR-Miner on the testing set are 420/(420+0) = 1, 420/(420+427) = 0.4959, and 2$\times$1$\times$0.4959/(1+0.4959) = 0.6630, respectively. Comparisons of the confidence, recall, and F1-score results are shown in Figs. \ref{Comparison of confidence}, \ref{Comparison of recall}, and \ref{Comparison of F1-score}, respectively.

\begin{figure}
    \centering
    \includegraphics[width=0.95\linewidth]{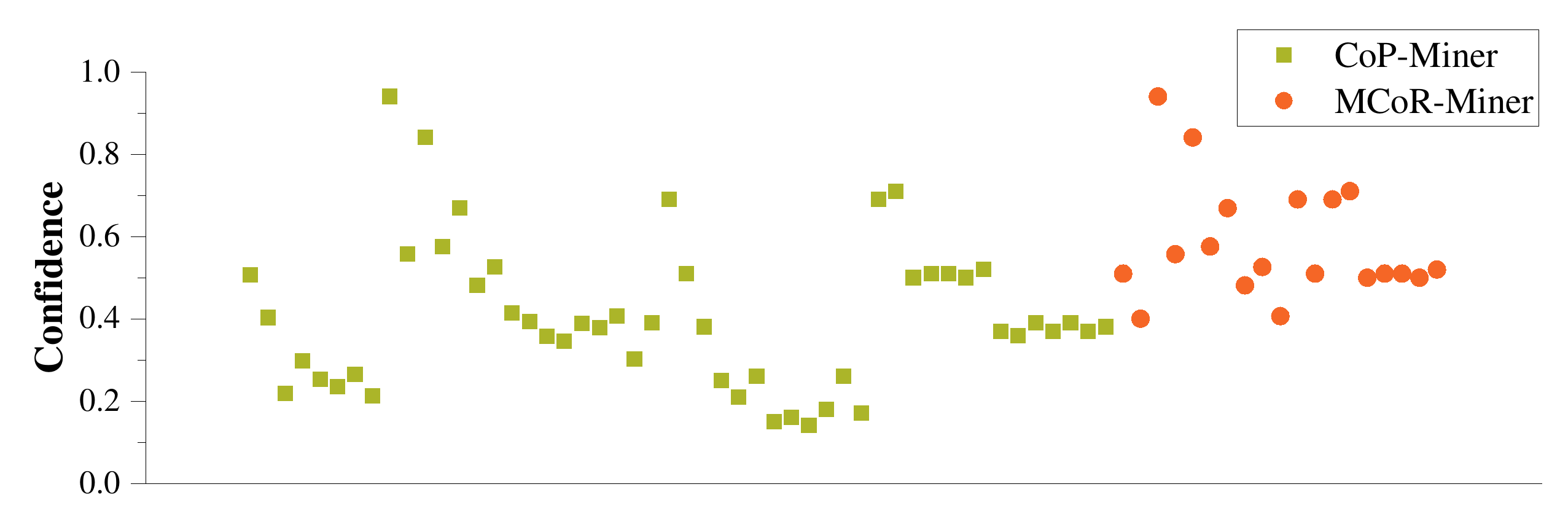}
    \caption{Comparison of confidence}
    \label{Comparison of confidence}
\end{figure}

\begin{figure}
    \centering
    \includegraphics[width=0.95\linewidth]{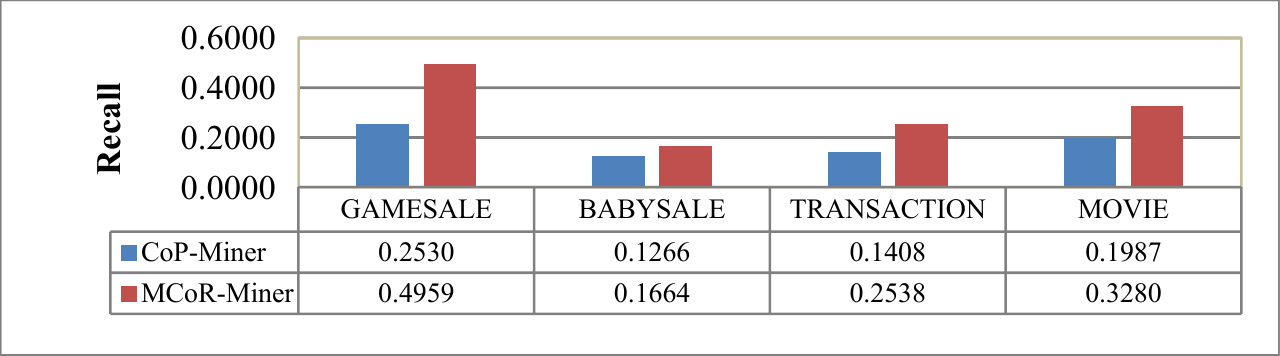}
    \caption{Comparison of recall}
    \label{Comparison of recall}
\end{figure}

\begin{figure}
    \centering
    \includegraphics[width=0.95\linewidth]{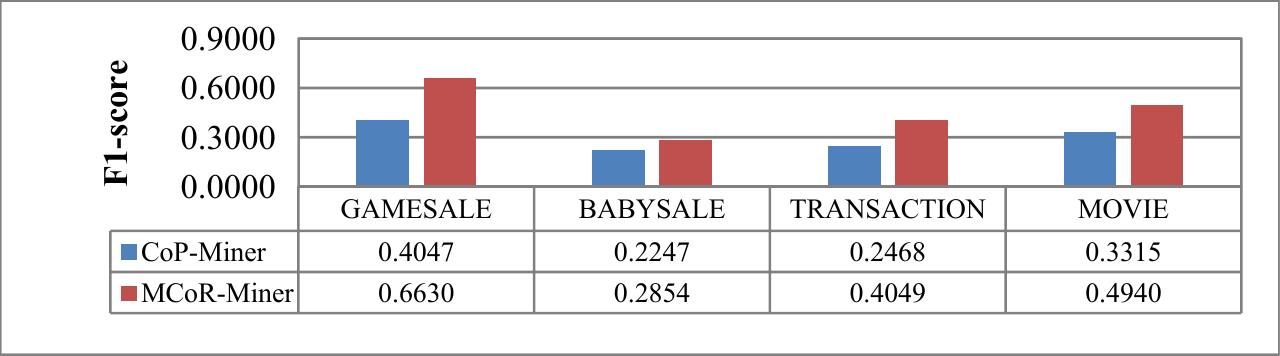}
    \caption{Comparison of F1-score}
      \label{Comparison of F1-score}
\end{figure}

The results indicate that our method has the following advantages.

\begin{enumerate} [1.]

\item MCoR-Miner can effectively prune the co-occurrence rules with low confidence. For example, Fig.\ref{Comparison of confidence} shows that the confidence for each maximal co-occurrence rule mined by MCoR-Miner is larger than 0.39, while the confidence for many of the co-occurrence patterns mined by CoP-Miner is less than 0.39, and the lowest is only 0.14. This is because MCoR-Miner discovers rules with high confidence and high frequency, while CoP-Miner only discovers rules with high frequency. Thus, MCoR-Miner can effectively prune co-occurrence patterns with low confidence, and the recommendation confidence is improved.


\item More importantly,  MCoR-Miner has better performance for recommendation task than CoP-Miner. For example, from Figs. \ref{Comparison of recall} and \ref{Comparison of F1-score}, on GAMESALE, the recall and F1-score of MCoR-Miner are 0.4959 and 0.6630, respectively, which are higher than those of CoP-Miner. This phenomenon can also be found on the other datasets, meaning that MCoR-Miner yields better recommendation performance than CoP-Miner. 

\item MCoR-Miner is a sequential rule-based recommendation method that is more interpretable than the learning-based recommendation method. The reason is as follows. MCoR-Miner discovers the rules with high confidence by calculating the support of patterns in the training set. Taking GAMESALE as an example, the supports of patterns $d$, $dd$, and $de$ are 1917, 970, and 763, respectively. Thus, MCoR-Miner generates two rules: \{$d$ $\to$ $d$, $d$ $\to$ $e$\}, whose confidences are 0.51 and 0.40, respectively. Therefore, MCoR-Miner has better interpretability. However, for a learning-based recommendation method, the recommendation model is very complex, and cannot be expressed intuitively by a mathematical formula.

\end{enumerate}

In summary, MCoR-Miner has better recommendation performance than CoP-Miner. More importantly, MCoR-Miner can form effective recommendation schemes, which can help decision-makers make correct decisions.


\section{CONCLUSION}
\label{section:CONCLUSION}

In nonoverlapping SPM or nonoverlapping sequential rule mining, if the prefix pattern or rule-antecedent is known, it is not necessary to discover all patterns or rules and then filter out irrelevant patterns or rules. This problem is called co-occurrence pattern mining or co-occurrence rule mining. To avoid discovering irrelevant patterns and to obtain better recommendation performance, and inspired by the concept of maximal pattern mining, we investigate MCoR mining. Unlike classical rule mining, which requires two parameters, minimum confidence and minimum support, MCoR mining does not need the minimum support, since it can be automatically calculated based on the support of the rule-antecedent and the minimum confidence. To effectively discover all MCoRs with the same rule-antecedent, we propose the MCoR-Miner algorithm. In addition to the support calculation, MCoR-Miner consists of three parts: preparation stage, candidate pattern generation, and screening strategy. To effectively calculate the support, we propose the DBI algorithm, which adopts depth-first search and backtracking strategies equipped with an indexing mechanism. MCoR-Miner uses the filtering strategy to prune the sequences without the rule-antecedent to avoid useless support calculation. It adopts the FET and BET strategies to reduce the number of candidate patterns, and employs the screening strategy to avoid finding the maximal rules by brute force. To evaluate the performance of MCoR-Miner, eleven competitive algorithms were implemented, and eight datasets were considered. Our experimental results showed that MCoR-Miner yielded better running performance and scalability than the other competitive algorithms. More importantly, compared with frequent co-occurrence pattern mining, MCoR-Miner had better recommendation performance.

In this paper, we focus on the nonoverlapping maximal co-occurrence  rule mining. The nonoverlapping sequential pattern mining is a kind of repetitive sequential pattern mining. We know that there are some similar mining methods, such as one-off sequential pattern mining and disjoint sequential pattern mining. Different mining methods will lead to different mining results. The one-off and disjoint maximal co-occurrence rule mining deserve further exploration. Moreover, for a specific  mining task, which mining method is the best is worth further study. More importantly, the sequence studied in this paper is a special sequence, which means that the sequence consists of items. However, general sequence consists of itemsets, each of which  has many ordered items. It is valuable to investigate frequent repetitive pattern mining and repetitive rule mining in general sequence database. 

\section*{Acknowledgement}
This work was partly supported by National Natural Science Foundation of China (61976240, 62120106008),  National Key Research and Development Program of China (2016YFB1000901), and Natural Science Foundation of Hebei Province, China (Nos. F2020202013).

{\footnotesize
}

%

 \vspace{-1cm}
\begin{IEEEbiography}[{\includegraphics[width=1in,height=1.25in,clip,keepaspectratio]{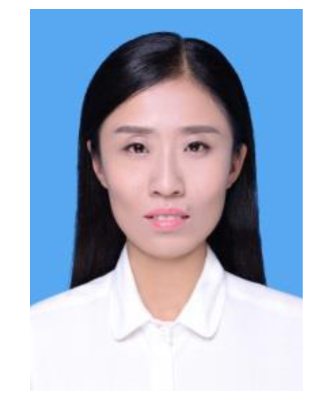}}]
{Yan Li}
received the PhD degree in Management Science and Engineering from Tianjin University, Tianjin, China. She is an associate professor with Hebei University of Technology. Her current research interests include data mining and supply chain management.
\vspace{-1cm}
\end{IEEEbiography}

\begin{IEEEbiography}[{\includegraphics[width=1in,height=1.25in,clip,keepaspectratio]{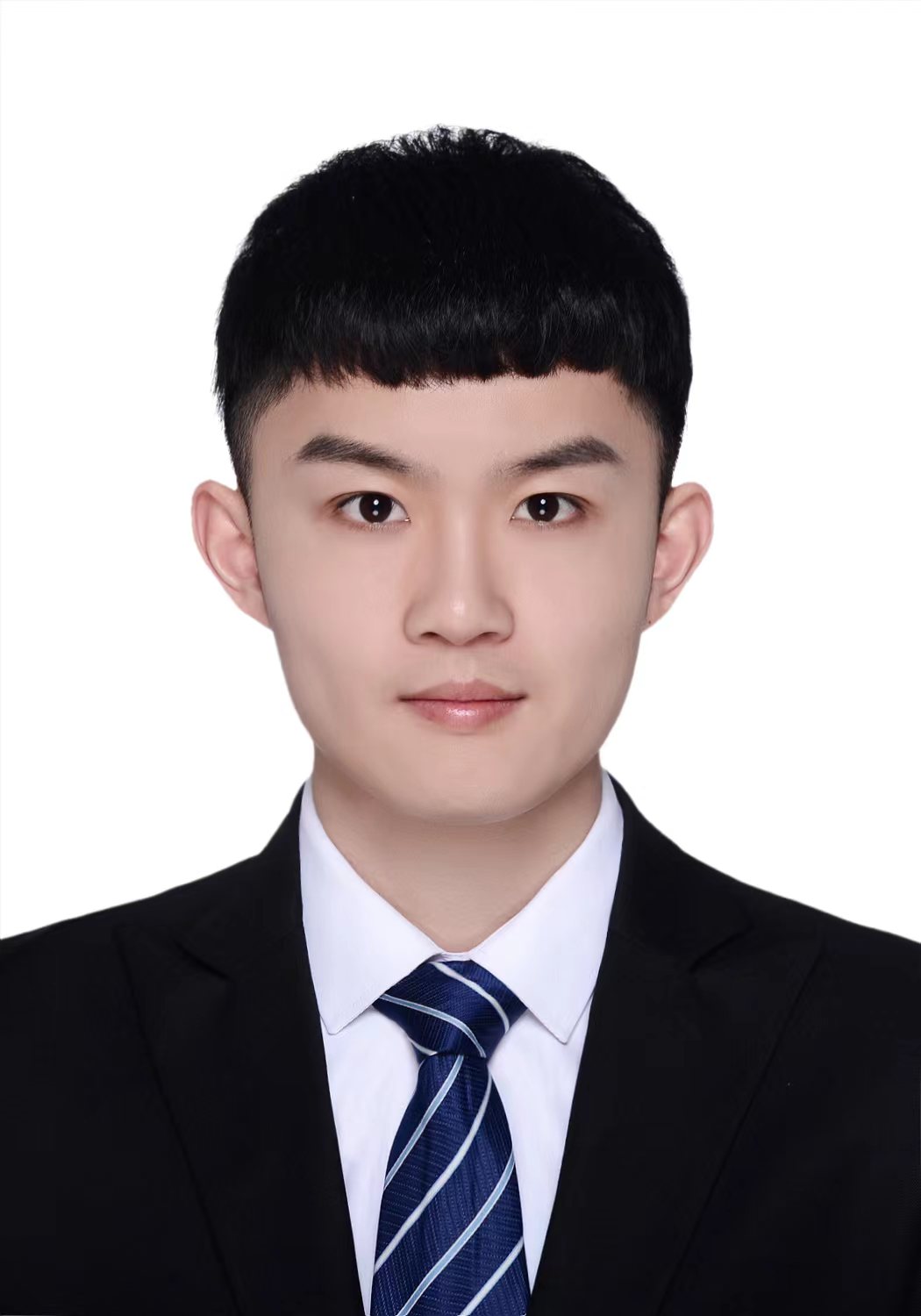}}]{Chang Zhang}  received the master degree in the School of Economics and Management from the Hebei University of Technology, Tianjin, China. His current research interests include data mining and machine learning.
\end{IEEEbiography}

\begin{IEEEbiography}[{\includegraphics[width=1in,height=1.25in,clip,keepaspectratio]{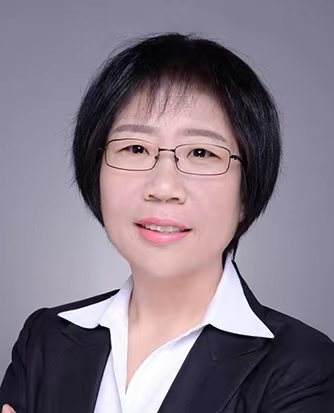}}]{Jie Li} received the Ph.D. degree in electrical engineering from the Hebei University of Technology, Tianjin, China, in 2002. She is currently a Professor with the School of Economics and Management, Hebei University of Technology. Her research interests include Big Data analytics in smart health care, business, and finance. 
\end{IEEEbiography}

\begin{IEEEbiography}
[{\includegraphics[width=1in,height=1.25in,clip,keepaspectratio]
{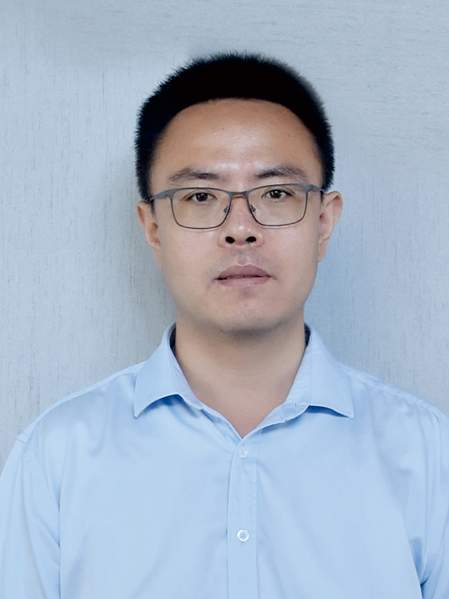}}]{Wei Song} received his Ph.D. degree in Computer Science from University of Science and Technology Beijing, Beijing, China, in 2008. He is a professor with the School of Information Science and Technology at North China University of Technology. His research interests are in the areas of data mining and knowledge discovery. He has published more than 40 research papers in refereed journals and international conferences.
\end{IEEEbiography}

\begin{IEEEbiography}[{\includegraphics[width=1in,height=1.25in,clip,keepaspectratio]{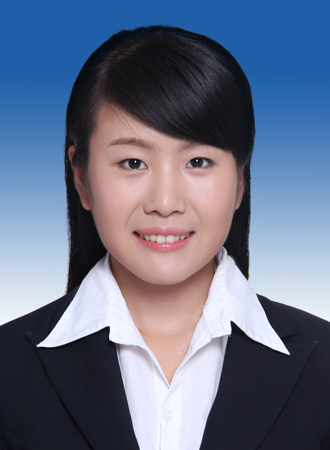}}]{Zhenlian Qi} received the Ph.D. in Environmental Science and Engineering, Harbin Institute of Technology, Harbin, China in 2020. She was a postdoc with Guangdong University of Technology, from 2020 to 2022. She is currently an Associate Professor with Guangdong Eco-Engineering Polytechnic, Guangzhou, China. Her research interests include environmental engineering, eco-big data, and data analytics.
\end{IEEEbiography}

\begin{IEEEbiography}[{\includegraphics[width=1in,height=1.25in,clip,keepaspectratio]{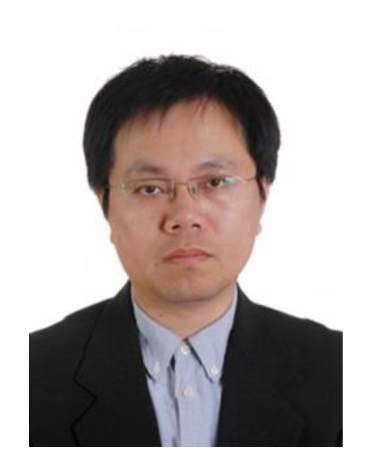}}]{Youxi Wu}received the Ph.D. degree in Theory and New Technology of Electrical Engineering from the Hebei University of Technology, Tianjin, China. He is currently a Ph.D. Supervisor and a Professor with the Hebei University of Technology. He has published more than 30 research papers in some journals, such as IEEE TKDE, IEEE TCYB, ACM TKDD, ACM TMIS, SCIS, INS, JCST, KBS, EWSA, JIS, Neurocomputing, and APIN. He is a senior member of CCF and a member of IEEE. His current research interests include data mining and machine learning.
\end{IEEEbiography}

\begin{IEEEbiography}[{\includegraphics[width=1in,height=1.25in,clip,keepaspectratio]{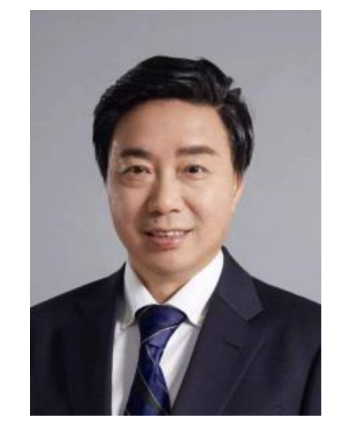}}]{Xindong Wu}
received the Ph.D. degree from the University of Edinburgh, Edinburgh, U.K. He is a Yangtze River Scholar with the Hefei University of Technology, Hefei, China. His current research interests include data mining, big data analytics, knowledge based systems, and Web information exploration. Dr. Wu was an Editor-in-Chief of  IEEE Transactions on Knowledge and Data Engineering (TKDE) (2005-2008) and is the Steering Committee Chair of the IEEE International Conference on Data Mining and the Editor-in-Chief of Knowledge and Information Systems. He is a Fellow of the American Association for the Advancement of Science and IEEE.
\end{IEEEbiography}

\end{document}